\newcommand{\id}{\mathds{1}}
\newcommand{\ve}{\varepsilon}
\newcommand{\cE}{\mathcal{E}}
\newcommand{\be}{\begin{equation}}
\newcommand{\ee}{\end{equation}}
\newcommand{\nc}{\textup{\,\o\,}}
\newcommand{\fe}{\text{f}_\varepsilon}
\let\origtau\tau 
\renewcommand{\tau}{\scalebox{1.44}{$\origtau$}}
\newtheorem{theorem}{Theorem}
\newtheorem{definition}{Definition}
\newtheorem{lemma}[theorem]{Lemma}
\newcommand{\name}{\text{TBSM}}
\renewcommand{\ve}{\varepsilon}
\newcommand{\unnumberedfootnote}[1]{%
  \begingroup
  \renewcommand{\thefootnote}{}
  \footnote{#1}%
  \endgroup
}
\begin{document}

\title{Noise-robust proofs of quantum network nonlocality}

\author{Sadra Boreiri$^\dagger$}
\affiliation{Department of Applied Physics, University of Geneva, Switzerland}
\author{Bora Ulu$^\dagger$}
\affiliation{Department of Applied Physics, University of Geneva, Switzerland}
\author{Nicolas Brunner}
\affiliation{Department of Applied Physics, University of Geneva, Switzerland}
\author{Pavel Sekatski}
\affiliation{Department of Applied Physics, University of Geneva, Switzerland}
\unnumberedfootnote{$^\dagger$ These authors contributed equally to this work}

\begin{abstract}
Quantum networks allow for novel forms of quantum nonlocality. By exploiting the combination of entangled states and entangled measurements, strong nonlocal correlations can be generated across the entire network. So far, all proofs of this effect are essentially restricted to the idealized case of pure entangled states and projective local measurements. Here we present noise-robust proofs of network quantum nonlocality, for a class of quantum distributions on the triangle network that are based on entangled states and entangled measurements. The key ingredient is a result of approximate rigidity for local distributions that satisfy the so-called ``parity token counting'' property with high probability. Our methods can be applied to any type of noise. As illustrative examples, we consider quantum distributions obtained with imperfect sources and obtain a noise robustness up to $\sim 80\%$ for dephasing noise and up to $\sim 0.5\%$ for white noise. Additionally, we prove that all distributions in the vicinity of some ideal quantum distribution are nonlocal, with a bound on the total-variation distance $\sim 0.25\%$. 
Our work opens interesting perspectives towards the practical implementation of quantum network nonlocality.
 \end{abstract}

\maketitle

\section{Introduction}

A growing interest has recently been devoted to the question of quantum nonlocality in networks, see e.g. \cite{Tavakoli_Review} for a review. A general framework has been developed to investigate nonlocal correlations in networks featuring independent sources \cite{Branciard2010,Branciard_2012,Fritz_2012,Rosset2016,wolfe2019inflation,renou2019limits,Aberg2020,wolfe2021inflation,Contreras20,gisin2020constraints,Ming_Luo_1,Ming_Luo_2}. A central motivation comes from the fact that the network setting allows for novel forms of nonlocality compared to the more standard Bell test, where a single source distributes a physical resource (e.g. an entangled quantum state) to the parties. 

\begin{figure}[b!]
    \centering
    \includegraphics[width=0.8\columnwidth]{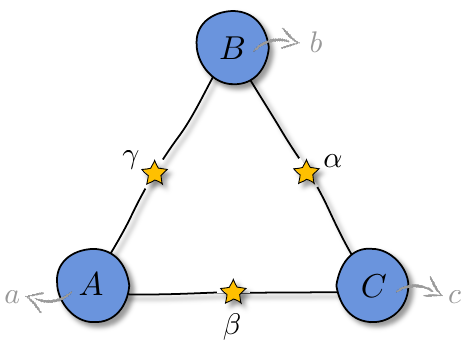}
    \caption{The triangle network features three distant parties, connected pairwise by three independent sources distributing physical systems.}
\end{figure}

Notably, it is possible to observe quantum nonlocality with parties performing a fixed measurement \cite{Fritz_2012,Branciard_2012}; see also \cite{Polino2023} for a recent experiment. That is, in each round of the experiment, each party provides an output, but they receive no input. This in contrast to the standard Bell test where the use of measurement inputs is essential.

Another key aspect of the network setting is the possibility to exploit joint entangled measurements (where one or more eigenstates are non-separable) \cite{gisin2019entanglement}. Similarly, as in quantum teleportation, the combination of entangled states and entangled joint measurements allows for strong correlations to be distributed across the entire network. Notably, this leads to novel forms of quantum nonlocal correlations that are genuine to the network structure \cite{Renou_2019}, in the sense that they crucially rely on the use of entangled joint measurements \cite{Supic2022,Sekatski2023}. This is possible for a simple triangle network, as in Fig. 1.

While these works represent significant progress in our understanding of quantum nonlocality, they suffer from one major limitation. Namely, these proofs of genuine quantum network nonlocality are derived in an ideal noiseless setting, where the sources distribute ideal (pure) quantum states and nodes perform ideal (projective) quantum measurements. Formally, these proofs are based on the concept of ``token counting'' rigidity \cite{Renou_2019,renou2022network,Boreiri2023}, which so far only applies to the ideal (noiseless) case. Hence, these nonlocality proofs no longer apply when any realistic noise level is considered. This represents a major hurdle towards any practical implementation or application of these ideas, as any experiment features a certain amount of noise originating from unavoidable technical imperfections. Moreover, numerical analysis based on neural networks indicates noise-robust nonlocality \cite{krivachy_neural_2020}.

One can draw a parallel with the situation of nonlocality in the mid-sixties. Indeed, John Bell's initial proof of nonlocality applied only to the idealized case of a pure entangled state and projective local measurements \cite{Bell}. A key step was then made by Clauser, Horne, Shimony, and Holt who derived a noise-robust proof of quantum nonlocality via their famous CHSH Bell inequality \cite{CHSH}, which enabled the first Bell experiments \cite{CHSH_ex_2,CHSH_ex_3}. This is still today the most commonly used test of nonlocality \cite{Hensen_2016, shalm2015strong,Zeilinger_Bell}, and the basis of many applications within the device-independent approach (see e.g. \cite{review}).

In this work we address the question of noise-robust
proofs of quantum network nonlocality. We derive methods for this problem and apply them to several examples of quantum nonlocal distributions that are based on the concept of ``parity token counting'' (PTC) \cite{Boreiri2023}. Our main contribution is a result of approximate rigidity for local distributions that satisfy the PTC property with a high (but non-unit) probability. Importantly, this method can be applied to any type of noise. We present illustrative examples, considering
quantum distributions obtained from imperfect sources, featuring either dephasing or white noise. Additionally, we show that all distributions in the vicinity of some ideal quantum distribution are nonlocal, with a bound on the total-variation
distance, which can be viewed as a nonlinear Bell inequality for the triangle network. Our work opens interesting perspectives towards the practical implementation of Bell nonlocality in quantum networks.

\section{Summary of results}

To start, we first provide a summary of the main contributions of this work. 

\begin{enumerate}
\item We present a general family of quantum distributions on the triangle network (without inputs) and prove their nonlocality in the ideal (noiseless) case, see Section \ref{sec: TBSM}. These distributions include as a special case the distributions of Renou et al. \cite{Renou_2019}, from now on referred to as RGB4. The nonlocality proof is based on the  ``parity token counting'' (PTC) rigidity property~\cite{Boreiri2023}. In words, it states that on the triangle network with binary outcomes, any local distribution for which the sum of the outputs is always odd (1 or 3) has a unique corresponding local model.
\item We derive noise-robust nonlocality proofs for these quantum distributions, see Section~\ref{sec: WN}. Our methods can be applied to arbitrary type of noise. As an illustration, we demonstrate the nonlocality of quantum distributions where up to $\sim 80\%$ dephasing noise, or $\sim 0.55$\% white noise, is added at each source. We also investigated other noise models, such as no-click and photon loss; see Appendix~\ref{app: other noise}. Moreover, in Section~\ref{sec: TVD} we characterize the distributions in the vicinity of a given nonlocal distribution.  For a specific quantum distribution, we can show that any distribution in a $\sim 0.25$\% total-variation distance ball around it is nonlocal. This can be viewed as a Bell inequality, in the sense that it guarantees that a full-measure region of the correlation set (the ball around the quantum distribution) is nonlocal.
 \item The key ingredient which allows us to prove noise-robust nonlocality, is a result on the approximate PTC rigidity, discussed in Section~\ref{sec: approximate PTC}. It gives a quantitative characterization of all local models leading to distributions for which the PTC condition holds approximately. That is, the distributions for which the sum of the outputs is odd with high probability.  With the help of this result, we also derive an analytical outer relaxation of the set of binary output correlations compatible with a triangle-local model.
\end{enumerate}

 Below we start with Section~\ref{sec: backgrounds} where we introduce more formally the context, the notations, and review the concept of parity token counting (PTC) distributions.

\section{Background}

\label{sec: backgrounds}

 To start we recall some background notions and fix the notations. Consider a triangle network as in Fig. 1, where three independent sources (labeled $\alpha$, $\beta$,  and $\gamma$) distribute physical systems to three distant parties (nodes $A$, $B$, and $C$). Each party thus receives two physical subsystems (from two different sources), and produces a measurement outcome, denoted by the classical variables $a$, $b$, and $c$ (of finite cardinality $d$). The experiment results in the joint probability distribution $P(a,b,c)$. 
 
 We aim to understand which distributions are compatible with the triangle causal network depicted in Fig. 1. The set of possible distributions depends on the underlying physical theory. We are interested here in distributions that can be realized in classical physics (where the sources produce correlated classical variables) and in quantum theory (where the sources can produce entangled states). Our focus will be on quantum distributions that cannot be realized classically, hence demonstrating network nonlocality.

 First, a distribution $P_L(a,b,c)$ is said to be triangle-local if it admits a classical model, i.e. it can be decomposed as 
\be\label{eq: P local}
P_L(a,b,c) = \mathds{E}\Big( P_A(a| \beta, \gamma)  P_B(b|\gamma,\alpha) P_C(c| \alpha, \beta) \Big),
\ee
where $\mathds{E}$ is the expected value over the independently distributed classical variables $ \alpha, \beta, \gamma$ \cite{rosset2017universal}. Furthermore, the conditional probabilities $P_X(x|\xi,\xi')$ can be considered deterministic without loss of generality, and represented by the response functions $x(\xi,\xi')$. A distribution that does not admit a decomposition of the form \eqref{eq: P local} is termed triangle-nonlocal. For brevity, we refer to triangle-(non)local correlations simply as (non)local throughout the text.

Second, we say that a distribution $P_Q(a,b,c)$ is triangle-quantum if it can be written as
\be\begin{split}
P_Q(a,b,c) = \tr &\Big(\rho^{\alpha}_{B_LC_R}\otimes\rho^{\beta}_{C_LA_R}\otimes \rho^{\gamma}_{A_LB_R} \\
&\times E_{A_LA_R}^{a} \otimes E_{B_LB_R}^{b} \otimes E_{C_LC_R}^{c}    \Big)
\end{split} \label{quantumtriangle}
\ee
where $\rho^{\xi}$ denote the quantum states distributed by the sources (density operators), and $\{E_{YY'}^{x}\}$ denote the measurements performed at each party and are given by positive operator valued measures (POVMs). Note that one should pay attention to the order of subsystems when computing \eqref{quantumtriangle}.

 For a given output cardinality $d$  we denote the set of all triangle-local distributions $\mathcal{L}_\triangle^{d}$ and the set of all triangle-quantum distributions $\mathcal{Q}_\triangle^{d}$. In general, characterizing these sets is a challenging problem. Outer approximations can be derived \cite{Chaves2015,wolfe2019inflation,wolfe2021inflation,Aberg2020}, but typically only provide loose bounds.

 For $d \geq 3$, it is known that $\mathcal{L}_\triangle^{d} \subset\mathcal{Q}_\triangle^{d}$, i.e. there exist quantum distributions that are nonlocal (see \cite{Boreiri2023} and references therein). For $d=2$, this is still an open question.

Known proofs of quantum nonlocality for the triangle network are of two types. The first are distributions that can be viewed as a clever embedding of a standard Bell test (e.g. CHSH) on the triangle network without inputs \cite{Fritz_2012}. The idea is to distribute effective inputs via two classical sources. Each party will broadcast the received inputs via their output. The third source distributes the entangled state as in the usual Bell test. See also \cite{Fraser,Weilenmann_2018,supic,Boreiri2023} for more examples. For this class of distributions, noise-robust methods have been recently developed \cite{Chaves2021}, leading to a first experiment \cite{Polino2023}. Due to their connection with standard Bell tests, these distributions can be implemented without the need for any entangled measurements. Hence, the nonlocality of these distributions originates solely from the use of an entangled state. 
 
 In contrast, there exist quantum distributions on the triangle network based on the judicious combination of entangled states and measurements, as first shown by Renou and colleagues \cite{Renou_2019}. It was recently shown that this distribution requires in fact the use of entangled measurements \cite{Sekatski2023}, hence demonstrating genuine network quantum nonlocality \cite{Supic2022}. So far, these nonlocality proofs are essentially restricted to an ideal noiseless scenario; see e.g. \cite{Renou_2019,renou2022network,Abiuso2022,Boreiri2023,Pozas2023}. That is, they consider a setting where the shared states $\rho^{\xi}$ are pure and the local measurements $\{E_{YY'}^{x}\}$ are projective. However, these proofs no longer apply when any realistic amount of noise is included\footnote{Note the semi-analytical methods of inflation \cite{wolfe2019inflation} can detect a specific instance of the RGB4 distribution, but the resulting noise-robustness is around $10^{-7}$ for white noise at the sources, so irrelevant from any practical perspective.}. On the other hand, numerical analysis based on machine learning suggests that the nonlocality of these quantum distributions is in fact robust to noise \cite{krivachy_neural_2020}, up to $\approx 10 \%$ white noise at the sources for some instances of the RGB4 distribution. Deriving noise-robust analytical proofs is therefore an important open problem, which we address in this work.

 \subsection{PTC distributions and rigidity}

A key class of distributions for our work is termed parity token counting (PTC). These are local distributions on the triangle network, obtained via the following model. Each source possesses a single token that it sends to either one of its connected parties with some given probability. For instance, the source $\alpha$ sends its token to Bob with probability $p_\alpha$ and to Charlie with probability $1-p_\alpha$. Then each party outputs the parity of the total number of received tokens. Therefore, consider that we define the binary random variable $t_{\alpha} \in \{0,1\}$ as the number of tokens $\alpha$ sends to Bob, $t_{\beta} \in \{0,1\}$ as the number of tokens  $\beta$ sends to Charlie, and $t_{\gamma} \in \{0,1\}$ as the number of tokens  $\gamma$ sends to Alice. Then for example, Alice receives $t_{\gamma}$ from the source $\gamma$ and $t_{\beta} \oplus 1$ from the source $\beta$ and her response function would be $a(\beta, \gamma) = t_{\gamma} \oplus t_{\beta} \oplus 1$, where $\oplus$ denotes the sum modulo two.

The distribution $P(a,b,c)$ resulting from the above PTC model has binary outputs ($d=2$). By construction, any such distribution satisfies 
 \be\label{eq: PTC condition}
 a \oplus b \oplus c =1,
 \ee 
which we refer to as the PTC condition. 

Let us now explore the reverse link. Starting from a distribution that satisfies the PTC condition, what can we infer about the underlying classical model? Interestingly, PTC distributions demonstrate a form of rigidity \cite{Boreiri2023}; note that this property of rigidity was initially discussed for token counting models \cite{renou2022network}. Rigidity means that a PTC distribution can only be obtained via a PTC model (up to irrelevant relabellings). That is, for any local model leading to a PTC distribution, there exists for each source a token function
\be \label{eq: PTC rigidity}\begin{split}
&T_\xi: \xi \to t_\xi \in \{0,1\} \qquad \text{such that}\\
&x(\xi,\xi')= t_\xi \oplus t_{\xi'}\oplus 1,
\end{split}
\ee
for the sources $\xi , \xi' \in \{\alpha, \beta, \gamma\}$ and $x \in \{a,b,c\}$ representing the party connected to the sources $\xi , \xi'$.
In addition, if the probability distribution is such that $E_a, E_b, E_c\neq 0$ for $E_x = \text{Pr}(x=0)-\text{Pr}(x=1)$, the token functions can be chosen to fulfill
\be\label{eq: token values}
p_{\alpha}(t):=\text{Pr}(t_\alpha=t) = \frac{1}{2}\left(1+(-1)^t\sqrt{\left|\frac{E_b E_c}{E_a}\right|}\right)
\ee
and similar equalities of the other sources. This means that there is essentially only a unique local model over the triangle which can simulate such a distribution. In~\cite{Boreiri2023} one can find  a simple characterization of the local PTC set $\mathcal{L}_{\triangle|PTC}^2$, containing all triangle-local distributions that satisfy the PTC condition of Eq.~\eqref{eq: PTC condition}. A graphical representation of $\mathcal{L}_{\triangle|PTC}^2$ is given in Fig.~\ref{fig:LPTC}, we find that it covers only about 12.5\% of the volume\footnote{We have computed the volume of the set by uniformly sampling $10^6$ points $P$ from the PTC slice $\mathcal{P}_{PTC}^2$ and verifying for each point if it admits a local model or not. For symmetric distributions  $E_A=E_B=E_C$ we find that the set $\mathcal{L}_{\triangle|PTC}^2$ is characterized by a simple inequality $P(1,1,1)\geq\frac{1}{4}$.} of $\mathcal{P}_{PTC}^2$ -- the set of binary probability distributions $P(a,b,c)$ fulfilling Eq.~\eqref{eq: PTC condition}.
\begin{figure}[h]
    \centering
\includegraphics[width=0.95\columnwidth]{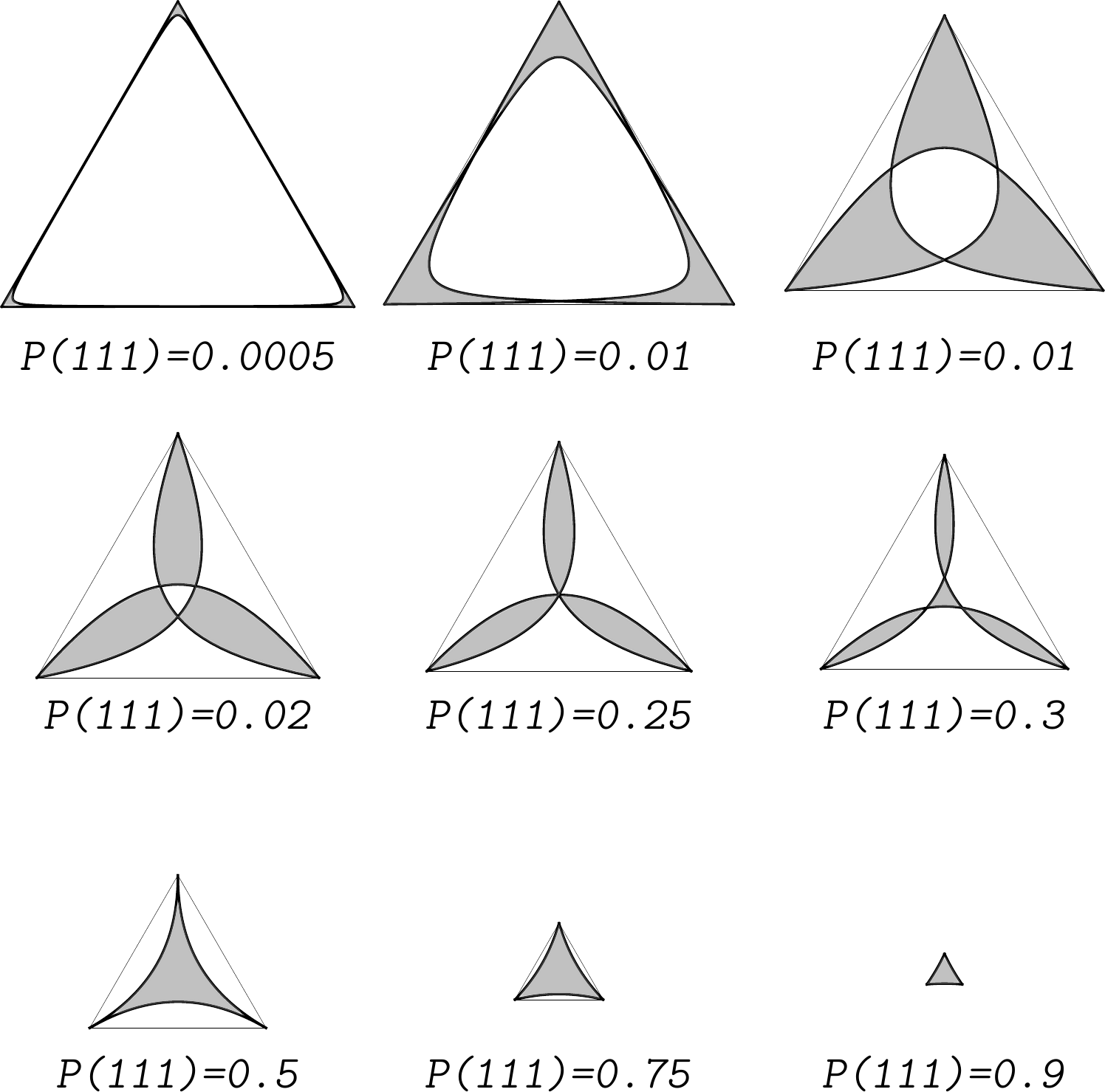}
    \caption{A layer-by-layer representation of the set $\mathcal{L}_{\triangle|PTC}^2=\mathcal{Q}_{\triangle|PTC}^2 $, containing all triangle-local (or equivalently quantum) distributions $P(a,b,c)$ with binary outcomes,
    satisfying the deterministic PTC constraint $\text{Pr}(a\oplus b\oplus c =1 )=1$. Each layer depicts the set for a given value of $P(a=b=c=1)$. The distributions inside the set can be represented by the vector $\bm v= \big(P(1,0,0),P(0,1,0),P(0,0,1)\big)$, with the remaining nonzero probability $P(1,1,1)$ given by normalization. Each  plot in the figure is a slice of the set for a  fixed value of $P(1,1,1)$, i.e. it is a cut of the set with the plane define by the equation $\bm v \cdot (1,1,1) =1-P(1,1,1)$. In each layer, the outer triangle delimits the set of valid probabilities. A 3D representation of the set $ \mathcal{L}_{\triangle|PTC}^2$ in printable format can be obtained from the authors upon request.}
    \label{fig:LPTC}
\end{figure}

Remarkably, the PTC rigidity, albeit formulated differently, is also true for any quantum PTC distribution \cite{Sekatski2023}. In particular, this implies that in the slice of the probability set where the parity condition in Eq.~\eqref{eq: PTC condition} holds (ideal case without noise), there is no separation between the quantum and local sets, i.e. $\mathcal{Q}_{\triangle|PTC}^2 = \mathcal{L}_{\triangle|PTC}^2$.

\section{Approximate PTC rigidity for local distributions}
\label{sec: approximate PTC}

Let us now consider a local distribution $P(a,b,c)$ for which the parity condition only holds approximately, i.e.
\be \label{eq: approx rigid}
\text{Pr}(a \oplus b \oplus c = 1) \geq 1- \ve.
\ee
Intuitively, one expects the underlying local model to verify some approximate form of rigidity. This is what we now establish.

Since the response functions are deterministic\footnote{Note that any measurement that is not described by a deterministic response function can be decomposed as an additional local source of randomness followed by a deterministic response function. In this case, the randomness source can be merged with one of the sources present in the network, and all of the following arguments can be applied at the level of the merged sources.}, there is a set 
\be
\Lambda =\{\bm \xi | a(\beta,\gamma)\oplus b(\alpha,\gamma)\oplus c(\alpha,\beta)=1\} 
\ee
containing all the values $\bm \xi =(\alpha,\beta,\gamma)$ for which the parity condition in Eq.~\eqref{eq: PTC condition} is true. By assumption we have $\text{Pr}(\bm \xi \in \Lambda) \geq 1-\ve$.\\

Let us now take any value $\gamma_*$ for which $\text{Pr}(\bm \xi \in \Lambda|\gamma=\gamma_*)\geq 1-\ve$. Such a value must exist, since this probability cannot be strictly lower than its average on all $\gamma$. For the chosen value of $\gamma$, we define the token function to be 
\be
t_{\gamma_*}=1.
\ee
Next, take  $G_* = \{(\alpha,\beta)| (\alpha,\beta,\gamma_*)\in \Lambda\}$
to be the set of all pairs $(\alpha,\beta)$ for which the parity condition holds when $\gamma=\gamma_*$. On this set we now define the token functions $t_\alpha$ and $t_\beta$ as 
\be\begin{split}
t_\alpha &= b(\alpha,\gamma_*)\\
t_\beta &= a(\beta,\gamma_*).
\end{split}
\ee
Let us check that for $\gamma=\gamma_*$ and any values $\alpha$ and $\beta$ in $G_*$, the PTC rigidity of Eq.\eqref{eq: PTC rigidity} holds. By construction we have $t_\beta\oplus t_{\gamma_*}\oplus 1 = t_\beta = a(\beta,\gamma_*)$
and  $t_\alpha\oplus t_{\gamma_*}\oplus 1 = b(\alpha,\gamma_*)$, while
\be
t_\beta \oplus t_\alpha \oplus 1 = a(\beta,\gamma_*) \oplus b(\alpha,\gamma_*)\oplus 1 = c(\alpha,\beta)
\ee
follows from the parity condition at $(\alpha,\beta,\gamma_*)\in\Lambda$. In addition, we have a lower bound on the probability that $(\alpha,\beta)$s are sampled in $G_*$
\be
\text{Pr}\big((\alpha,\beta)\in G_*\big) =\text{Pr}(\bm \xi \in \Lambda|\gamma=\gamma_*)\geq 1-\ve.
\ee

Finally, consider the set 
\be
\Omega =\{\bm \xi | (\alpha,\beta)\in G_* \text{ and } \bm \xi \in \Lambda\}.
\ee
which has the probability 
\be\begin{split}
\text{Pr}(\bm \xi \in \Omega)
& \geq 1- \text{Pr}(\bm \xi \notin \Lambda) - \text{Pr}((\alpha,\beta)\notin G_*)\\
&\geq 1-2\ve.
\end{split}
\ee
For any point $(\alpha,\beta,\gamma)$ from this set, we can define the prolongation of the token function $t_\gamma$ as  
\begin{equation}\label{eq: gamma prolong}\begin{split}
    t_\gamma^{(\alpha,\beta)} &= 1 \oplus a(\beta,\gamma)\oplus a(\beta,\gamma_*)\\
 &= 1\oplus b(\alpha,\gamma)\oplus b(\alpha,\gamma_*),    
\end{split}
\end{equation}
where the equality of the two definitions follows from the fact that both points $(\alpha,\beta,\gamma)$ and $(\alpha,\beta,\gamma_*)$ are inside $\Lambda$ by construction. It is also easy to see that this value $t_\gamma$ satisfies the PTC rigidity, i.e. $t^{(\alpha,\beta)}_{\gamma}\oplus t_\beta \oplus 1 = a(\beta,\gamma)$ and the other equations, for the variables value $(\alpha,\beta,\gamma)$ (see Appendix \ref{app: approximate PTC}). The problem however is that the prolongations of $t_\gamma$ must not necessarily agree for different choices of $(\alpha,\beta)$. Hence this strategy can not be used to define the token function $t_\gamma$ on the whole set $\Omega$. Nevertheless, we show in  Appendix~\ref{app: approximate PTC}, that there exists a subset $\Upsilon \subset \Omega$ such that
\be \label{eq: fe}
\begin{split}
 \text{Pr}(\bm \xi \in \Upsilon)&\geq 1- \fe \geq 1-3\ve \quad \text{with}\\
 1-\fe &= \begin{cases}
 \frac{1}{2}-\ve +\frac{1-4\ve}{2 \sqrt{1-2 \ve }} & \ve \leq \frac{1}{4}\\
\frac{1-2\ve}{2}  & \ve >\frac{1}{4}
\end{cases} 
\end{split}
\ee
on which the token function $t_\gamma$ can be prolonged consistently. The key idea is to use the fact that the definitions in Eq.~\eqref{eq: gamma prolong} must agree for any two points in $\Omega$ with the same $\alpha$ or $\beta$. This demonstrates the following result holds.\\

\textbf{Result 1. (Approximate PTC rigidity)} \textit{For any triangle-local model leading to a distribution $P(a,b,c)\in \mathcal{L}_{\triangle}^2$ that fulfills the approximate parity condition $\text{Pr}(a \oplus b \oplus c = 1) \geq 1- \ve$, there exists a subset $\Upsilon$ of local variable values $\bm \xi=(\alpha,\beta,\gamma)$ with} $Pr(\bm \xi \in \Upsilon)\geq 1-\fe$ \textit{(see Eq.~\ref{eq: fe}), on which one can define the token functions $T_\alpha,T_\beta,T_\gamma$ that satisfy the PTC rigidity $a(\beta,\gamma)=T_\beta\oplus T_{\gamma}\oplus 1$ etc.}\\

For any triangle-local model, the set $\Upsilon$ gives rise to the subnormalized distribution  $P(a,b,c;\bm\xi\in \Upsilon)$ representing the probability of the event "$\xi=(\alpha,\beta,\gamma)\in \Upsilon$ and $a(\beta,\gamma) =a, b(\alpha,\gamma)=b, c(\alpha,\beta)=c$" for the output values $a,b,c$. These probabilities  satisfy $\sum_{a,b,c} P(a,b,c;\bm\xi\in \Upsilon) \geq 1-\fe$. For simplicity let us assume that this bound is tight (otherwise one can repeat the following discussion with a second parameter $\text{f}' \leq \fe$ and reach the same conclusions). It follows that 
\be\label{eq: P'}
P(a,b,c;\bm\xi\in \Upsilon) +\fe P'(a,b,c) = P(a,b,c),
\ee
where $P'(a,b,c)$ is some\footnote{Note that we know the values $P'(a,b,c)$ on the outcomes with wrong parity and will exploit this fact later.} (non necessarily local) distribution.\\

Now, let us extend the token functions $T_\xi$ to all the values of the local variables that do not appear in $\Upsilon$, this can be done by e.g. assigning $t_\xi = 0$ to all such values. With the token functions defined everywhere, we can now define the corresponding extended probability distribution as follows
\begin{align}
&\widetilde P(a,b,c)
\\&= \mathds{E}\Big(P_{\scriptscriptstyle ptc}(a|t_\beta ,t_\gamma)P_{\scriptscriptstyle ptc}(b|t_\alpha ,t_\gamma)P_{\scriptscriptstyle ptc}(c|t_\alpha ,t_\beta) \Big) \nonumber
\end{align}
where the outputs are obtained with the PTC response functions
\be\label{eq: PTC response}
P_{\scriptscriptstyle ptc}(x|t_\xi ,t_{\xi '}) =\begin{cases}
1 & x= t_\xi \oplus t_{\xi '}\oplus 1\\
0 & x\neq t_\xi \oplus t_{\xi '}\oplus 1
\end{cases}.
\ee
By construction, this distribution is triangle-local and PTC. Furthermore, we know that  $P(a,b,c)$ and $\widetilde P(a,b,c)$ agree on the subset $\bm \xi \in \Upsilon$, hence the identity 
\be\label{eq: tilde P'}
P(a,b,c;\bm\xi\in \Upsilon) +  \fe \widetilde P'(a,b,c) =  \widetilde P(a,b,c)
\ee
holds for some PTC (non-necessarily local) distributions $\widetilde P'$. Combining Eqs.~(\ref{eq: P'},\ref{eq: tilde P'}) we conclude that the observed local distribution must satisfy 
$P(a,b,c) = \widetilde P(a,b,c) - \fe \widetilde P'(a,b,c) +\fe P'(a,b,c)$,
which can be nicely expressed in terms of total variation distance
\be
\delta(P,\widetilde P)=\frac{1}{2}\sum_{a,b,c}|P(a,b,c) -\widetilde P(a,b,c)|\leq \fe.
\ee

We have thus shown that any local distribution $P(a,b,c)$ satisfying the approximate parity condition of Eq.~\eqref{eq: approx rigid}, must be close to the local PTC set
\be
\delta(P,\mathcal{L}^2_{\triangle|PTC}) = \min_{\widetilde P\in \mathcal{L}_{\triangle|PTC}^2} \delta(P,\widetilde P)\leq \fe
\ee
In turn, the approximate parity condition is equivalent to a bound on the total variation distance between $P$ and the PTC slice of the probability set
$\delta(P, \mathcal{P}_{PTC}^2) = \min_{Q\in \mathcal{P}_{PTC}^2} \delta(P,Q) \leq \ve.$ Therefore, we have shown the following result. \\

\textbf{Result 2. (Relaxation of the local set $\mathcal{L}_\triangle^2$)} \textit{Any triangle-local distribution $P(a,b,c) \in \mathcal{L}_\triangle^2$ that is close to the PTC slice of the probability set $\delta(P, \mathcal{P}_{PTC}^2)\leq \ve$ is also close to the local (also triangle-local) PTC set} $\delta(P,\mathcal{L}^2_{\triangle|PTC})\leq \fe$ \textit{(see Eq.~\ref{eq: fe}).}\\

\begin{figure}
\includegraphics[width=\columnwidth]{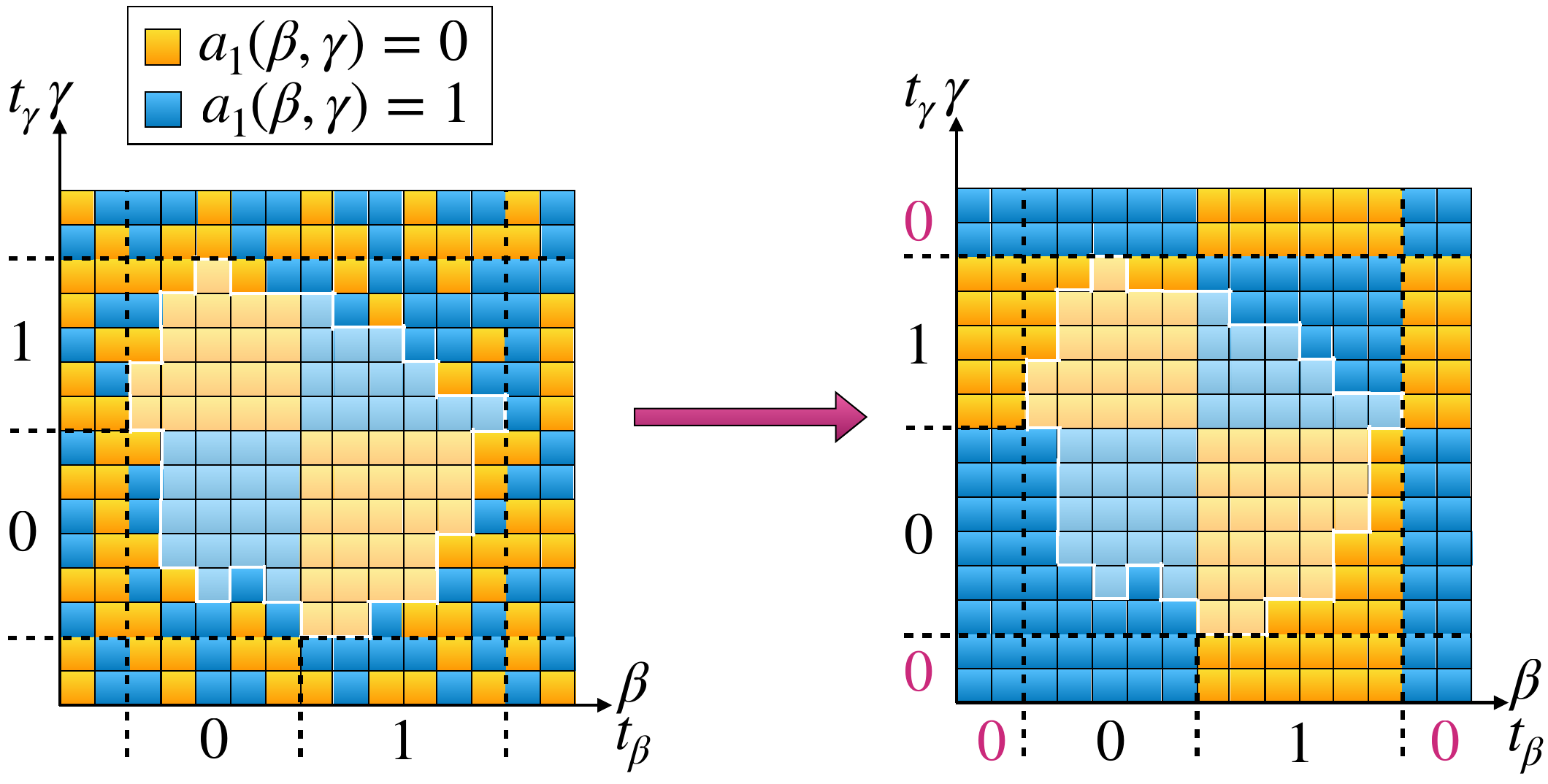}
    \caption{A graphical representation of the construction used in the proof of approximate PTC rigidity. The coordinates along the axes represent different values of the local variables $\beta$ and $\gamma$, here discretized to 15 possible values each. The color of each square gives the value of Alice's outcome $a(\beta,\gamma)=0$ (yellow) and $a(\beta,\gamma)=1$ (blue). On the left, the lighter region depicts the set $\Upsilon$ on which the binary token functions $t_\beta$ and $t_\gamma$ are defined (given below the axes) and satisfy $a(\beta,\gamma)=t_\beta\oplus t_\gamma \oplus 1$. On the right, the token functions are extended on the whole set of local variables by setting them to 0 outside of $\Upsilon$. The values obtained by applying the PTC response function $ a(\beta,\gamma)=t_\beta\oplus t_\gamma \oplus 1$ (right) are not guaranteed to match the observed values $a(\beta,\gamma)$ (left) outside of $\Upsilon$.  
    }
    \label{fig:RelabelPTC}
\end{figure}

In words, the approximate rigidity allows us to define a cone of distributions $\widehat{\mathcal{L}}^2_{\triangle}$ that are accessible when going away from the PTC slice, as illustrated in Fig.~\ref{fig:result2}. Loosely speaking -- it contains triangle-local distributions at a distance $\ve$ to the PTC slice that must also be close enough to the local PTC set such that we can not move more than a distance $\ve$ \textit{inside} the slice. Any distribution outside of the cone is guaranteed to be nonlocal, i.e. we have established an outer approximation of the local set $\mathcal{L}_\triangle^2 \subset \widehat{\mathcal{L}}^2_{\triangle}$. It is an open question whether all quantum distributions are also contained in this cone.

To illustrate Result 2 we discuss the nonlocality of the noisy W distributions (see Appendix \ref{app: noisy W}).

\begin{figure}
    \centering
    \includegraphics[width=0.9\columnwidth]{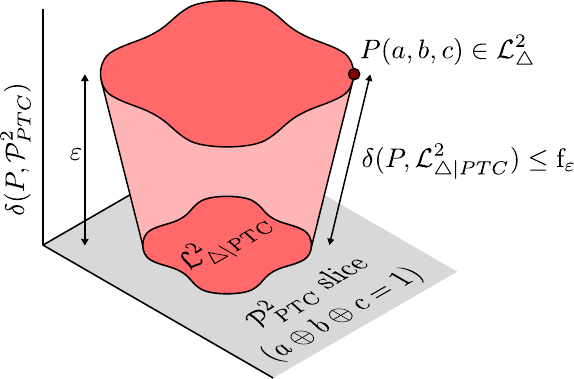}
    \caption{A graphical representation of Result 2. Consider any triangle-local distribution $P(a,b,c)\in \mathcal{L}^2_\triangle$ which is at a distance $\varepsilon= \delta(P,\mathcal{P}_{PTC}^2)$ to the PTC slice containing all probability distributions satisfying $\text{Pr}(a\oplus b\oplus c \neq 1)=0$. Result 2 implies that $P(a,b,c)$ must also be "close enough" $\delta(P,\mathcal{L}_{\triangle|PTC}^2)\leq \fe$ to the local PTC set $\mathcal{L}^2_{\triangle|PTC}$ of which we have full characterization. Here, $\delta$ is the total variation distance, and the function $\fe\leq 3\ve$ is defined in Eq.~\eqref{eq: fe}.}
    \label{fig:result2}
\end{figure}

\section{Noise-robust nonlocality in $\mathcal{Q}^4_\triangle$}

\label{sec: TBSM}
We now consider a particular family of four-outcome quantum distributions on the triangle. Those are obtained from the following quantum models. Each source distributes a two-qubit state
\be\label{eq: pure state}
\ket{\psi} = \lambda_0\ket{01} + \lambda_1\ket{10},
\ee

with real and positive $\lambda_0^2+\lambda_1^2=1$, to the neighboring parties. Each party performs the same projective measurement $\{\Pi_{x_1,x_2}=\ketbra{\phi_{x_1,x_2}}\}$ on the two qubits they receive from the neighboring sources. The POVM elements are given by projectors on the states:
\be
\begin{split}
    \ket{\phi_{1,0}} &= u \ket{0,1} + v\ket{1,0}\\
    \ket{\phi_{1,1}} &= v \ket{0,1} - u\ket{1,0}\\
    \ket{\phi_{0,0}} &= w \ket{0,0} + z\ket{1,1}\\
    \ket{\phi_{0,1}} &= z \ket{0,0} - w\ket{1,1}
\end{split}
\ee
with real positive $u,v,w,z$ satisfying $u^2+v^2=w^2+z^2=1$. Here, the element $\ket{\phi_{x_1, x_2}}$ corresponds to the 2-bit output ${x_1,x_2}$ for the party $x \in \{a,b,c\}$. The resulting family of quantum distributions $P(a_1, a_2, b_1, b_2, c_1, c_2)$ is termed "tilted Bell state measurements" (\name) distribution, since the local measurements can be viewed as tilted Bell state measurements. Elements of such a distribution are given by
\be\label{eq: BUBS distribution}
\begin{split}
P(1,i,1,j,1,k) &= (\lambda_0^3 u_i u_j u_k + \lambda_1^3 v_i v_j v_k)^2\\
P(1,i,0,j,0,k) &= (\lambda_0 \lambda_1^2 u_i w_j z_k + \lambda_1 \lambda_0^2 v_i w_k z_j)^2 \,\circlearrowleft
\end{split}
\ee
where $u_0=u,\, v_0=v\, ,w_0=w\, ,z_0=z,\,u_1 = v, \, v_1 = -u, \, w_1 = z$ and $z_1 = -w$. Here $\circlearrowleft$ means that the equation is valid up to cyclic permutations of the parties.

By setting $w=1-z=0$ one recovers the RGB4 family of distributions \cite{Renou_2019}, while merging the outcomes $(0,0)$ and $(0,1)$ leads to the three-output family of quantum distributions introduced in \cite{Boreiri2023}. 

As a warm-up, we now present a proof of nonlocality of the distributions in Eq.~\eqref{eq: BUBS distribution} for a subset of parameter values $(\lambda_0, u, w)$. It has the advantage of streamlining the proof techniques used in \cite{Renou_2019} and subsequently \cite{Boreiri2023}, while relying on the same basic idea. This will be helpful when we move to the noisy case.

\subsection{ Nonlocality of the noiseless TBSM distributions}
\label{sec: noiseless}

The nonlocality proof proceeds by contradiction, so let us assume that the distribution is local. It is easy to see that when the second output is discarded, the resulting binary distributions
\be
P(a_1,b_1,c_1) =\sum_{a_2,b_2,c_2}  P(a_1,a_2,b_1,b_2,c_1,c_2)
\ee
satisfy the parity condition $a_1\oplus b_1\oplus c_1=1$. By rigidity, it follows that for any underlying local model there exist the binary token values $t_\alpha, t_\beta, t_\gamma$ that explain the distribution 
\be
P(\bm x_1) = \mathds{E} \Big(P_{\scriptscriptstyle PTC}(\bm x_1|\bm t) \Big),
\ee
where $\mathds{E}$ is the expected value over $\bm t$ and we introduced the compact notation $\bm x_i=(a_i,b_i,c_i)$, $\bm t =(t_\alpha, t_\beta, t_\gamma)$ and $P_{\scriptscriptstyle PTC}(\bm x_1|\bm t) =
P_{\scriptscriptstyle ptc}(a_1|t_\beta ,t_\gamma)P_{\scriptscriptstyle ptc}(b_1|t_\alpha ,t_\gamma)P_{\scriptscriptstyle ptc}(c_1|t_\alpha ,t_\beta),$ with the response functions of Eq.~\eqref{eq: PTC response}.
Furthermore, if $E_{a_1},E_{b_1},E_{c_1}\neq 0$ the token distribution  $p(\bm t )= p_\alpha(t_\alpha)p_\beta(t_\beta)p_\gamma(t_\gamma)$ is known exactly via Eq.~\eqref{eq: token values}. \\

This conclusion is, of course, independent of whether we discard the outcomes $\bm x_2$ or not. Hence, for any local model resulting in the \name \ distribution, we can define the random variable $\bm t$, indicating the behavior of the tokens. Using the chain rule, we have 
$P(\bm x_1,\bm x_2)= \sum_{\bm t} p(\bm t) P(\bm x_2|\bm t) P(\bm x_1|\bm t , \bm x_2)$. From the rigidity, we can infer that $\bm x_1$ is solely a function of $\bm t$, leading to
\be
\begin{split}
    P(\bm x_1,\bm x_2)= \sum_{\bm t} p(\bm t) P_{\scriptscriptstyle PTC}(\bm x_1|\bm t) P(\bm x_2|\bm t).
\end{split}
\ee
Here, all the distributions $P(\bm x_2|\bm t)\in \mathcal{L}_\triangle^2$ must be triangle-local. Furthermore, these distributions are not independent, since the output of a party (e.g. $a_2$) can not be influenced by the source that is not connected to it (that is by $\alpha$ and the token value $t_\alpha$). This observation implies the following constraints on the marginal distributions $P(a_2|\bm t) = \sum_{b_2,c_2} P(\bm x_2|\bm t)$
\be\label{eq: network cons}
P(a_2|t_\alpha = 0,t_\beta,t_\gamma)=P(a_2|t_\alpha=1,t_\beta,t_\gamma) \quad \circlearrowleft
\ee
and similar constraints for all permutations of parties, as indicated by the symbol $\circlearrowleft$. The above discussion can be summarized as follows.\\

\textbf{Result 3. (Nonlocality of TBSM distributions)}\textit{ For any triangle-local distribution $P(\bm x_1,\bm x_2)\in \mathcal{L}_\triangle^4$ satisfying $a_1\oplus b_1\oplus c_1 =1$ , there exist eight probability distributions $P(\bm x_2|\bm t)$ such that}
\be\label{eq: nonloc pure}\begin{split}
& (3.i) \,P(\bm x_1,\bm x_2) =\sum_{\bm t} p(\bm t) P_{\scriptscriptstyle PTC}(\bm x_1|\bm t) P(\bm x_2|\bm t)\\
&(3.ii) \,P(a_2|t_\alpha = 0,t_\beta,t_\gamma)=P(a_2|t_\alpha=1,t_\beta,t_\gamma) \quad \circlearrowleft\\
&(3.iii)\,P(\bm x_2|\bm t) \in \mathcal{L}_\triangle^2 \qquad \forall \bm t
\end{split}
\ee
\textit{where $P_{\scriptscriptstyle PTC}$ is the parity token counting response function and the token distribution $p(\bm t )= p_\alpha(t_\alpha)p_\beta(t_\beta)p_\gamma(t_\gamma)$ is uniquely determined from $P(\bm x_1)$ via Eq.~\eqref{eq: token values} (except degenerate cases where it is not unique).}
\\

If we were able to show that these conditions cannot be satisfied together, we could conclude that the distribution $P(\bm x_1,\bm x_2)$ is nonlocal. Yet, the locality condition $(3.iii)$ here is very challenging to formalize since we do not have a good characterization of the local set $\mathcal{L}_\triangle^2$. A simple solution is to relax it and keep the other two conditions. This relaxation gives rise to a simple linear program (LP) that must be feasible for any triangle-local $P(\bm x_1,\bm x_2)$ with $a_1\oplus b_1\oplus c_1 =1$, that we give in Appendix \ref{app: LP excact PTC}.

\begin{figure}
    \centering
    \includegraphics[width=0.9\columnwidth]{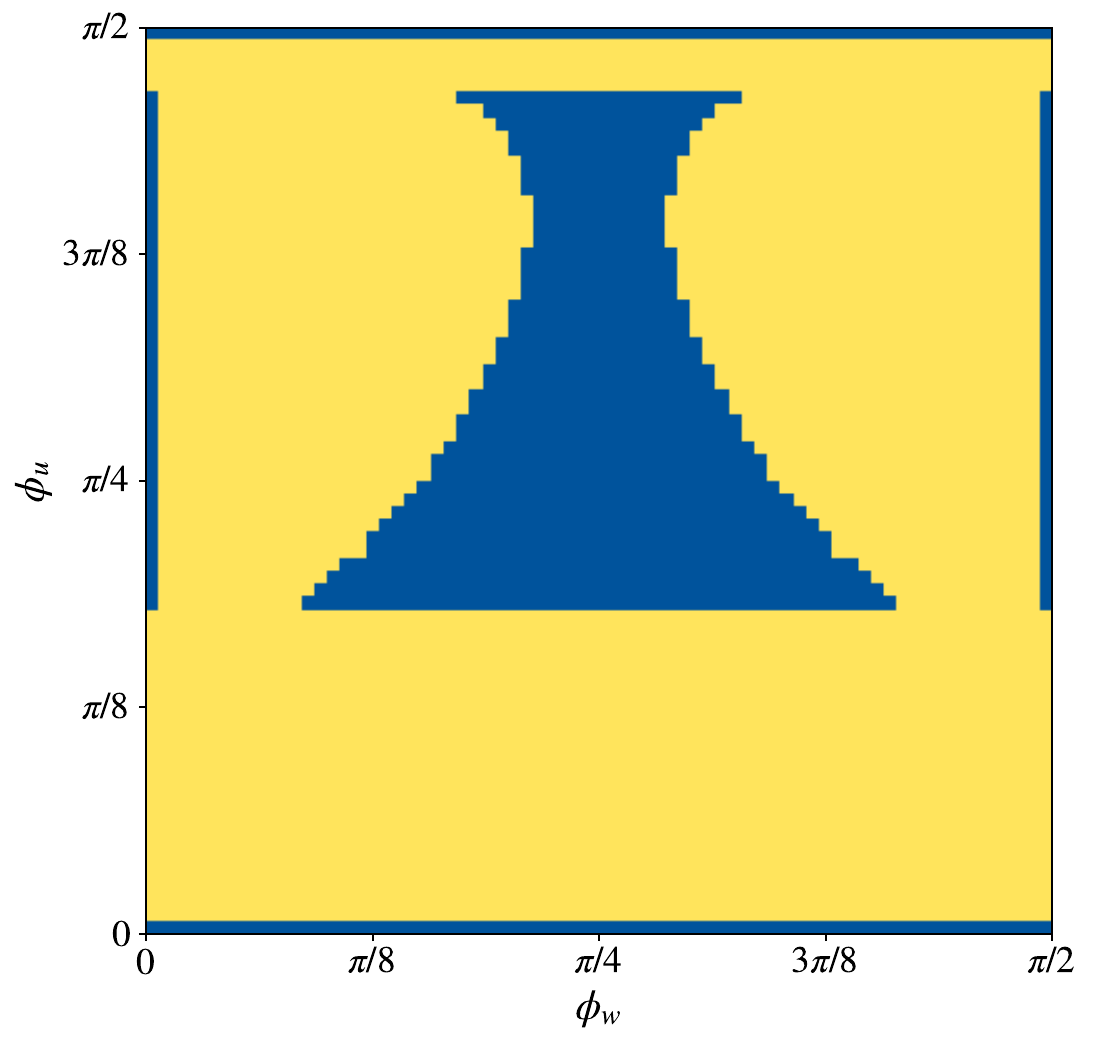}
    \caption{\textbf{Nonlocality of noiseless TBSM distributions.} Nonlocal region depicted in yellow. Here distributions are parametrized via angles: $u=\cos(\phi_u)$, $w=\cos(\phi_w)$. Parameter: $\lambda_0^2= 0.22$.}
    \label{fig:nonlocal_set}
\end{figure}

In Fig.~\ref{fig:nonlocal_set}, we present the results of the LP for $\lambda_0^2=0.22$. One notes that Result 3 can be used to prove the nonlocality of the noisy distribution if the noise does not perturb the parity of the $\bm x_1$ outcomes. For example, the local dephasing noise given by the channel $\cE_d: \ketbra{\psi} \mapsto \varrho$ with
\be\label{eq: dephasing1}
\varrho= (1-d) \ketbra{\psi} + d \left(\lambda_0^2 \ketbra{01} +  \lambda_1^2 \ketbra{10}
\right)\ee
is of this type. In Appendix \ref{app: LP excact PTC} we give the LP which finds the minimal amount of dephasing noise such that the distribution becomes triangle-local. Its results are presented in Fig.~\ref{fig:Dephasing} for $\lambda_0^2=0.22$.

\begin{figure}
    \centering
    \includegraphics[width=\columnwidth]{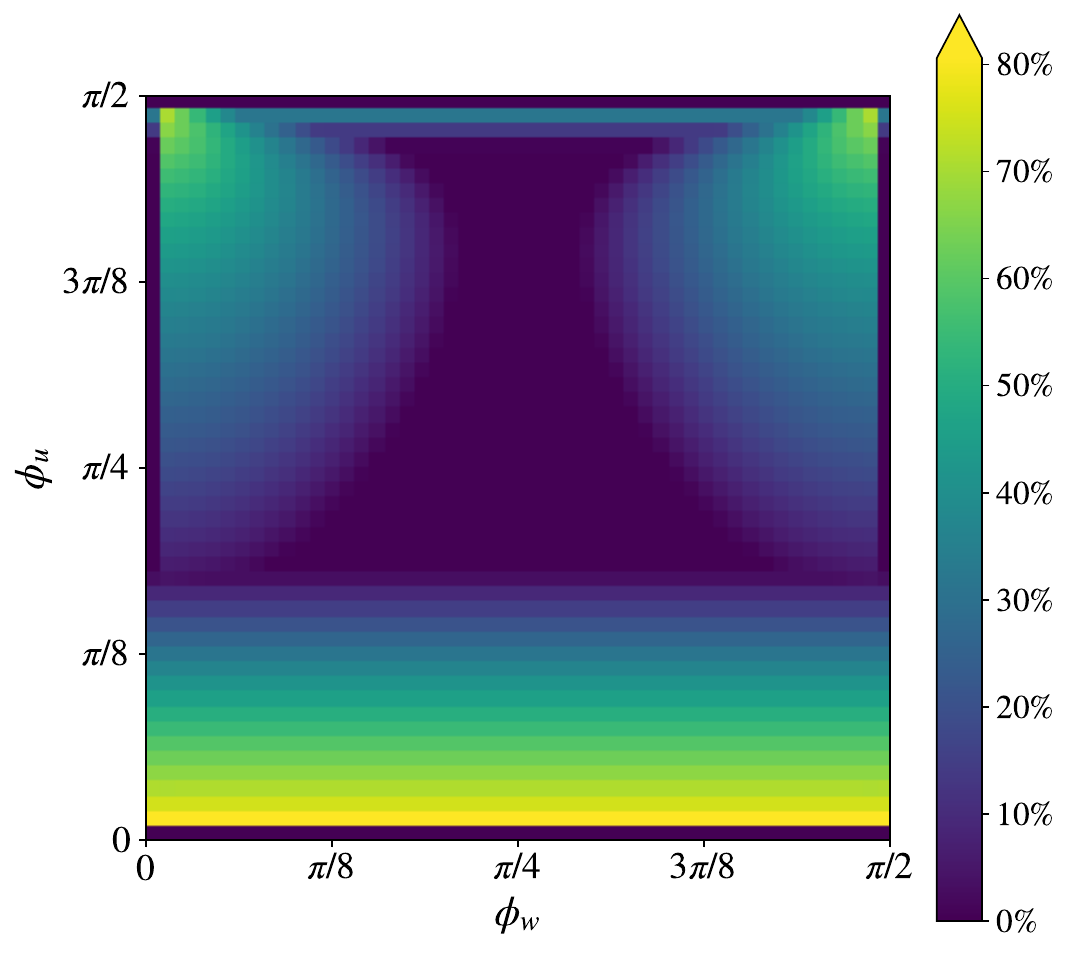}
    \caption{\textbf{Dephasing Noise Robustness.} The amount of local dephasing noise $d$ in Eq.~\eqref{eq: dephasing1} below which the TBSM distribution is guaranteed to be nonlocal. Parameters: $u=\cos(\phi_u)$, $w=\cos(\phi_w)$ and $\lambda_0^2= 0.22$.}
    \label{fig:Dephasing}
\end{figure}

Nevertheless, this still only applies to a slice of the quantum set $\mathcal{Q}_\triangle^4$. In the next Section, we derive an approach that singles out the nonlocality of a full-measure subset of triangle quantum distributions. Interestingly, we observe that the largest robustness to white noise, as well as for total-variation distance, is obtained for the same parameters $\lambda_0^2 = 0.22, \phi_u=0.42, \phi_w=0$, with $u=\cos(\phi_u)$ and  $w=\cos(\phi_w)$.

\subsection{Nonlocality of the noisy TBSM distributions}

\label{sec: WN}
Our main idea now is pretty simple -- adjust the LP-based nonlocality proof of the previous Section to the setting of approximate PTC rigidity discussed in Section \ref{sec: approximate PTC}.

Indeed, let us consider any triangle-local distribution 
\begin{align}\nonumber
&P(\bm x_1, \bm x_2) \\
&=\mathds{E}\big(P_A(a_1,a_2|\beta,\gamma)P_B(b_1,b_2|\alpha,\gamma) P_C(\dots) \big) \nonumber
\end{align}
such that 
$\text{Pr}(a_1\oplus b_1\oplus c_1=1)\geq 1-\ve$. By approximate PTC rigidity, we know that there exists a local PTC distribution $\widetilde P(\bm x_1)$ such that 
$\delta(P(\bm x_1),\widetilde P(\bm x_1))\leq \fe.$
This distribution was constructed by applying the token functions $T_\xi$ to the sources, and replacing the detectors with the parity token counting response functions (see Fig. \ref{fig:RelabelPTC}). This construction does not require us to discard the outcomes $\bm x_2$, and we can define a full distribution 
\begin{align}
&\widetilde P(\bm x_1,\bm x_2)
\\ 
&= \mathds{E}\big(P_{\scriptscriptstyle PTC}(\bm x_1|\bm t) P_A(a_2|\beta,\gamma) P_B(b_2|\alpha,\gamma) \dots). \nonumber
\end{align}
The distribution $\widetilde P$ is triangle-local, by construction it agrees with the initial distribution on outcomes $\bm x_2$ ($\widetilde P(\bm x_2)= P(\bm x_2)$), and is PTC local for the outcomes $\bm x_1$. Furthermore, we know that the two distributions match $P(\bm x_1,\bm x_2; \bm \xi \in \Upsilon) = \widetilde P(\bm x_1,\bm x_2; \bm \xi \in \Upsilon)$ on the subset $\Upsilon$ of local variable values with $\text{Pr}(\bm \xi \in \Upsilon)\geq 1-\fe$, which is equivalent to a bound on their total variation distance
\be
\delta(P(\bm x_1,\bm x_2),\widetilde P(\bm x_1,\bm x_2)) \leq \fe.
\ee
We have thus reached the following conclusion.\\

\textbf{Result 4. (Nonlocality of noisy TBSM distribution)}\textit{ For any triangle-local distribution $P(\bm x_1,\bm x_2)\in \mathcal{L}_\triangle^4$ satisfying $\text{Pr}(a_1\oplus b_1\oplus c_1 =1)\geq 1-\ve$, there exists a triangle-local distribution $\widetilde P(\bm x_1,\bm x_2)$ such that }
\be\begin{split} 
& (4.i) \, \widetilde P(\bm x_1,\bm x_2) = 0 \quad \text{for} \quad a_1\oplus b_1\oplus c_1\neq 1\\
&(4.ii) \,  \delta(P(\bm x_1,\bm x_2),\widetilde P(\bm x_1,\bm x_2)) \leq \fe.\\
&(4.iii)\, \sum_{\bm x_1} \widetilde P(\bm x_1,\bm x_2) = \sum_{\bm x_1} P(\bm x_1,\bm x_2) 
\end{split}
\ee\\

In turn, the distribution $\widetilde P$ satisfies the premises of Result 3. We could thus verify the nonlocality of $P$ by showing that no distribution $\widetilde P$ compatible with both Result 3 and 4 exists. A complication here is that the relation~\eqref{eq: token values} between the distribution $\widetilde P(\bm x_1)$ and the underlying token distribution $p(\bm t)$ is nonlinear. It is thus easier to introduce the token distributions $p_\alpha, p_\beta, p_\gamma$ as additional variables rather than computing them from $\widetilde P$. That way we can combine results 3 and 4 into the following corollary.\\

\textbf{Corollary of Result 4.}\textit{ For any triangle-local distribution $P(\bm x_1,\bm x_2)\in \mathcal{L}_\triangle^4$ satisfying $\text{Pr}(a_1\oplus b_1\oplus c_1 =1)\geq 1-\ve$, there exist probability distributions $P(\bm x_2|\bm t)$, $p_\xi(t_\xi)$ and $\widetilde P(\bm x_1,\bm x_2)$ such that $(4.ii),(4.iii),(3.ii),(3.iii)$ and 
\be \label{eq: Corrolary4}
(*) \,\widetilde P(\bm x_1,\bm x_2) =\sum_{\bm t} p(\bm t) P_{\scriptscriptstyle PTC}(\bm x_1|\bm t) P(\bm x_2|\bm t)
\ee
for $p(\bm t) = p_\alpha(t_\alpha)p_\beta(t_\beta)p_\gamma(t_\gamma)$.} \\

\begin{figure}
    \centering
    \includegraphics[width=\columnwidth]{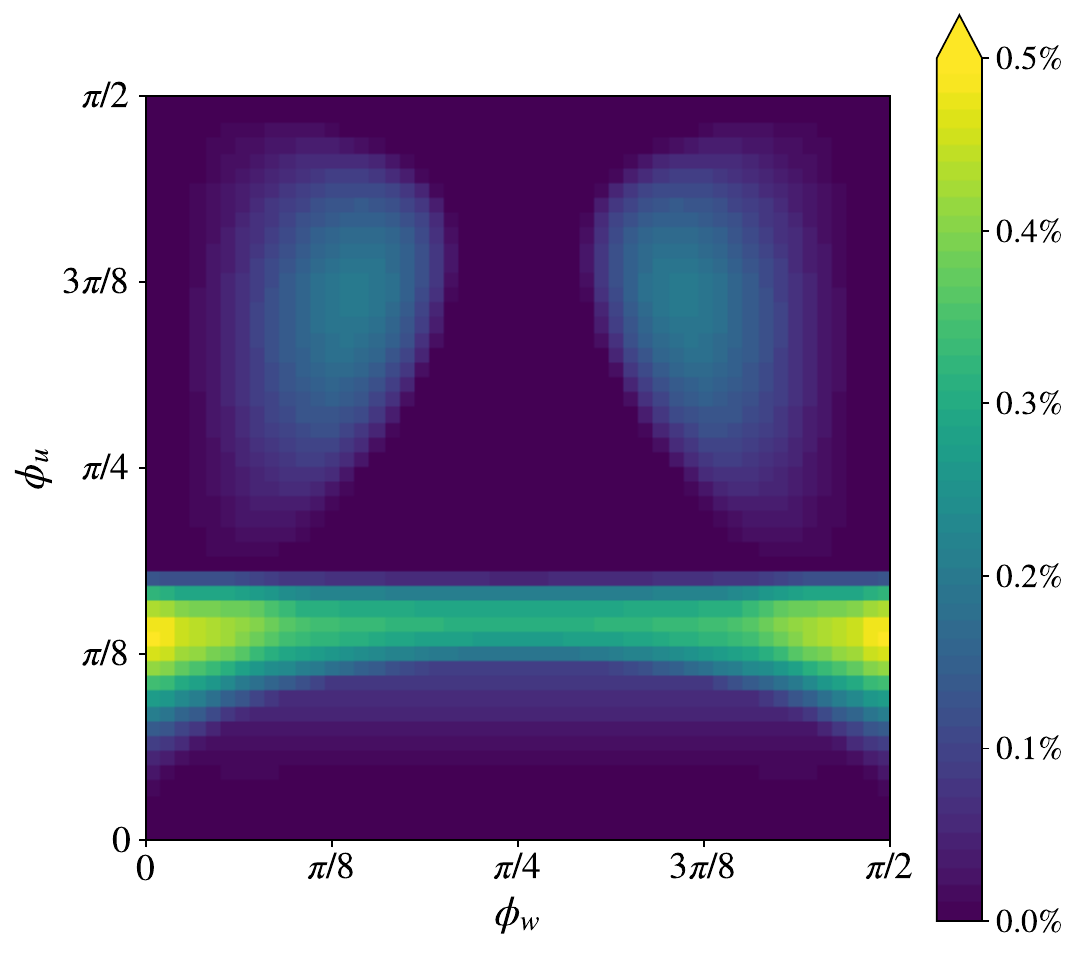}
    \caption{\textbf{White Noise Robustness.} Maximal amount of white noise below which we can prove the nonlocality of TBSM distributions. The largest amount of noise that can be tolerated is here $\omega=0.5 \%$; note that this value can be improved slightly to $\omega = 0.544 \%$ (see Appendix~\ref{app: grid}). Parameters: $u=\cos(\phi_u)$, $w=\cos(\phi_w)$ and $\lambda_0^2= 0.22$. }
    \label{fig:whitenoise}
\end{figure}

As in the ideal case to verify this feasibility problem we start by forgetting the condition $(3.iii)$ requiring the distributions  $P(\bm x_2|\bm t)$ to be local. Note also that the condition $(4.ii)$ on the total variation distance $\delta(P,\widetilde P)\leq \fe$ is equivalent to the existence of two probability distributions  $P'(\bm x_1,\bm x_2)$ and $\widetilde P'(\bm x_1,\bm x_2)$ fulfilling $P+\fe P' = \widetilde P + \fe \widetilde P'$ (see Lemma 1 in Appendix \ref{app: noisy TBSM}). By introducing these probabilities as variables we can rewire the conditions $(4.iii)$ and $(*)$ as
\begin{align}
    (4.iii')\, &\sum_{\bm x_1} \widetilde P'(\bm x_1,\bm x_2) =
   \sum_{\bm x_1}   P'(\bm x_1,\bm x_2) \\
   (*')\,\nonumber & P(\bm x_1,\bm x_2) +\fe \big(P'(\bm x_1,\bm x_2)- \widetilde P'(\bm x_1,\bm x_2)\big) \\&=\sum_{\bm t} p(\bm t) P_{\scriptscriptstyle PTC}(\bm x_1|\bm t) P(\bm x_2|\bm t)
\end{align}
and get rid of $(4.ii)$ and the variable $\widetilde P(\bm x_1,\bm x_2)$. Finally, it is also possible to directly bound the values of the token probabilities $p_\xi(t_\xi)$ from $(4.ii)$. One easily sees that the total variation distance implies $|\widetilde E_{x_1} - E_{x_1}|\leq 2 \fe$.
In fact, a slightly better bound can be obtained by using the fact that $\widetilde P$ is in the PTC slice
\be
|\widetilde E_{x_1} - E_{x_1}^\Lambda|\leq 2\fe-\ve,
\ee
where $E_{x_1}^\Lambda = \text{Pr}(x_1=0, a_1\oplus b_1\oplus c_1=1)-\text{Pr}(x_1=1, a_1\oplus b_1\oplus c_1=1)$. Hence, if all the correlators for the observed distribution satisfy $|E^\Lambda_{x_1}|> (2\fe-\ve)$ one can use Eq.~\eqref{eq: token values} to obtain
\be
\label{eq: token interval}
\begin{split}
\frac{1}{2}&\left(1+\sqrt{\frac{(|E_{b_1}^\Lambda|-(2\fe-\ve))(| E_{c_1}^\Lambda|-(2\fe-\ve))}{|E_{a_1}^\Lambda|+(2\fe-\ve)}}\right)\\
&\leq p_\alpha(0) \leq\\
 \frac{1}{2}&\left(1+\sqrt{\frac{(|E_{b_1}^\Lambda|+(2\fe-\ve))(| E_{c_1}^\Lambda|+(2\fe-\ve))}{|E_{a_1}^\Lambda|-(2\fe-\ve)}}\right),
\end{split}\ee
up to cyclic permutations.

In Appendix \ref{app: noisy TBSM} we show how using this bound the feasibility problem in Eq.~\eqref{eq: Corrolary4} can be relaxed to an LP, or a tighter collection of LPs for a grid of values $p_\xi(0)$. To illustrate this approach we now consider the family of noisy TBSM distributions, where local white noise is added to the pure state in Eq.~\eqref{eq: pure state} leading to 
the density operator
\be
\rho= (1-\omega) \ketbra{\psi}+ \omega \frac{\mathds{1}}{2}\otimes \frac{\mathds{1}}{2}.
\ee
The resulting distribution satisfies $\text{Pr}(a_1\oplus b_1 \oplus c_1=1)  = \frac{1}{2} \left(1+ (1-\omega)^3 \right)$. In Fig.~\ref{fig:whitenoise}  we plot the maximal amount of white noise $\omega$ for which we can prove the nonlocality of the distribution for $\lambda_0^2=0.22$ and different measurement parameters. The maximal value that we find is about $\omega = 0.55\%$  at $\phi_u=0.42$.

In Appendix~\ref{app: other noise} we also discuss the nonlocality of the distributions for white noise at the measurements, measurements with a no-click probability, as well as photon-number-resolving detectors with limited efficiency in the case of the single-photon entanglement implementation \cite{Abiuso2022}.

\begin{figure}
    \centering
    \includegraphics[width=\columnwidth]{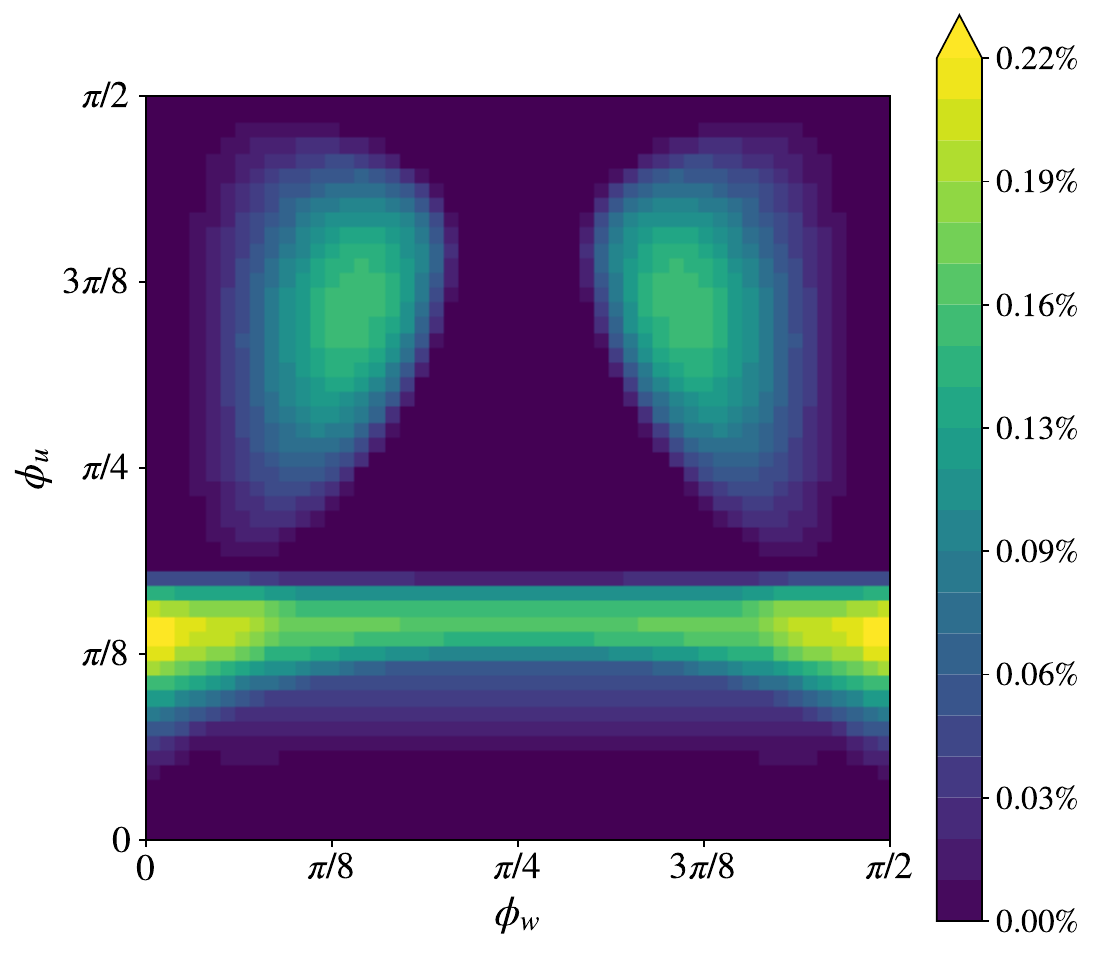}
    \caption{\textbf{Total Variation. }The size of the total variation distance ball $\mathcal{B}_\ve(\bar P)$ in Eq.~\eqref{eq: TVD ball} around the ideal TBSM distributions, for which we prove nonlocality. The maximal value in the plot is $\ve =0.22\%$, which can be improved to $\ve =0.24\%$  (see Appendix~\ref{app: grid}). Parameters: $u=\cos(\phi_u)$, $w=\cos(\phi_w)$ and $\lambda_0^2= 0.22$.}
    \label{fig:TV}
\end{figure}

\subsection{Nonlocality of regions of the quantum set.}

\label{sec: TVD}

So far we argued how the nonlocality of a given distribution $P(\bm x_1,\bm x_2)$ can be demonstrated. However, to better understand the separation between the quantum and the local sets it is more insightful to derive expressions that single out whole regions of the quantum sets that are guaranteed to be nonlocal, similarly to the usual Bell tests. In the case of nonlocality without inputs, considered here, such an expression must be nonlinear. We now derive such "Bell tests" in terms of total variation distance balls around a target quantum distribution.

Given a target quantum distribution $\bar P(\bm x_1,\bm x_2)\in \mathcal{Q}_\triangle^4$ define the total variation distance ball as the following subset of the correlation set
\be\label{eq: TVD ball}
\mathcal{B}_\ve(\bar P) = \{P\in \mathcal{P}_\triangle^4| \delta(P,\bar P) \leq \ve\}.
\ee
We will now show that for well-chosen target distributions $\bar P$ and small enough $\ve$ the whole set $\mathcal{B}_\ve(\bar P)$ is nonlocal. 

Concretely, consider $\bar P$ such that $a_1\oplus b_1 \oplus c_1=1$. Any distribution $P\in \mathcal{B}_\ve(\bar P)$ must satisfy
\be
\text{Pr}(a_1\oplus b_1 \oplus c_1=1)\geq 1-\ve.
\ee
In addition, if the ball contains some triangle-local distribution $P$ by Result 4 there must exist a local PTC distribution $\widetilde P$ with $\delta(\widetilde P(\bm x_1,\bm x_2),P(\bm x_1,\bm x_2) )\leq \fe $ by (4.\textit{ii}) and $\delta(\widetilde P(\bm x_2),P(\bm x_2) )=0$ by (4.\textit{iii}). By the triangle inequality, $\widetilde P$ must also satisfy
\begin{align}\nonumber
\delta(\widetilde P(\bm x_1,\bm x_2),\bar P(\bm x_1,\bm x_2))&\leq \delta(\widetilde P,P) +\delta(P,\bar P)\\
&\leq \fe +\ve \nonumber \\
\nonumber
\delta(\widetilde P(\bm x_2),\bar P(\bm x_2)) &\leq 0 +\delta(P,\bar P)\leq \ve,
\end{align}
leading to the following conclusion.\\

\textbf{Result 5. (Nonlocality of a ball $B_\ve(\bar P)$)}\textit{
If the ball $B_\ve(\bar P)\subset \mathcal{P}_\triangle^4$ contains triangle-local distributions, there exists a triangle-local  distribution $\widetilde P(\bm x_1,\bm x_2)$ such that}
\be\begin{split}
& (5.i) \, \widetilde P(\bm x_1,\bm x_2) = 0 \quad \text{for} \quad a_1\oplus b_1\oplus c_1\neq 1\\
&(5.ii) \,  \delta(\bar P(\bm x_1,\bm x_2),\widetilde P(\bm x_1,\bm x_2)) \leq \fe+\ve.\\
&(5.iii)\,  \delta(\bar P(\bm x_2),\widetilde P(\bm x_2)) \leq \ve.
\end{split}
\ee

Naturally, this can be combined with Result 3 in order to show that no local distribution $\widetilde P$ fulfilling $(5.i)-(5.iii)$ can exist. The argument for this is fairly similar to the corollary of Result 4.
In Appendix~\ref{app: var. dist. LP} we present an LP relaxation of this feasibility problem. In Fig.~\ref{fig:TV} we plot the size of the ball $\ve$ around the TBSM distributions with $\lambda_0^2=0.22$ which we prove to be nonlocal. The largest value we find is $\ve =0.24\%$ and it is achieved for an RGB4 distribution ($w=0$).

\section{Discussion}

\label{sec:discussion}

We have derived noise-robust proofs of quantum nonlocality in the triangle network without inputs. Our key ingredient is a result of approximate rigidity for parity token (PTC) local distributions. These methods can be applied to an arbitrary noisy distribution. In particular, we illustrate the relevance of the method by obtaining bounds on the admissible noise for a general family of quantum distributions that we introduced, generalizing the RGB4 distributions \cite{Renou_2019}. We characterized noise-robustness in terms of limited visibility for the source states, considering dephasing and white noise. Alternatively, we can also characterize distributions in the vicinity of a target distribution, via a bound on the total-variation distance.

Our work opens different perspectives. First, it would be interesting to see if our methods can be applied to larger networks and distributions satisfying PTC. This may lead to proofs of network nonlocality that are more robust to noise compared to the present results. 

Another direction is to see how to strengthen the noise robustness for the nonlocality proof in our Result 4. A promising option is to revisit Result 3, and strengthen the last condition (3.iii). We used the standard implementation of the inflation technique, but stronger versions could be obtained by exploiting nonlinear constraints. While we could prove robustness to white noise up to $0.55 \%$, there is still potential for significant improvements; indeed numerical methods~\cite{krivachy_neural_2020} suggest robustness of up to $ \sim 10 \%$. Any progress in this direction would be relevant, in particular from the perspective of experimental demonstrations. Concerning dephasing noise, we demonstrated strong robustness, up to $\sim 80 \%$. We conjecture that there exist quantum distributions that can tolerate any amount of dephasing noise arbitrarily close to one.

It would also be interesting to see if the property of approximate PTC rigidity can be extended from local models to quantum ones. If not, one can hope to find a quantum distribution violating the approximate PTC rigidity, indicating the existence of quantum nonlocal correlations within the triangle network with binary outputs and no inputs.\\

\emph{Code availability.---} Our implementation of the method for the triangle network can be found at \href{https://github.com/SadraBoreiri/Robust-Network-Nonlocality}{https://github.com/SadraBoreiri/Robust-Network-Nonlocality}\\

\emph{Acknowledgements.---} We thank Wolfgang D\"ur for discussions at the early stage of this project, and Victor Gitton for useful comments on the first version of the paper. We acknowledge financial support from the Swiss National Science Foundation (project 2000021 and NCCR SwissMAP) and by the Swiss Secretariat for Education, Research and Innovation (SERI) under contract number UeM019-3.

\bibliographystyle{quantum} 
\bibliography{main.bib}

\begin{thebibliography}{10}

\bibitem{Tavakoli_Review}
Armin Tavakoli, Alejandro Pozas-Kerstjens, Ming-Xing Luo, and Marc-Olivier Renou.
\newblock ``Bell nonlocality in networks''.
\newblock \href{https://dx.doi.org/10.1088/1361-6633/ac41bb}{Reports on Progress in Physics {\bf 85}, 056001}~(2022).

\bibitem{Branciard2010}
C.~Branciard, N.~Gisin, and S.~Pironio.
\newblock ``Characterizing the nonlocal correlations created via entanglement swapping''.
\newblock \href{https://dx.doi.org/10.1103/PhysRevLett.104.170401}{Phys. Rev. Lett. {\bf 104}, 170401}~(2010).

\bibitem{Branciard_2012}
Cyril Branciard, Denis Rosset, Nicolas Gisin, and Stefano Pironio.
\newblock ``Bilocal versus nonbilocal correlations in entanglement-swapping experiments''.
\newblock \href{https://dx.doi.org/10.1103/physreva.85.032119}{Physical Review A{\bf 85}}~(2012).

\bibitem{Fritz_2012}
Tobias Fritz.
\newblock ``Beyond {B}ell{\textquotesingle}s theorem: correlation scenarios''.
\newblock \href{https://dx.doi.org/10.1088/1367-2630/14/10/103001}{New Journal of Physics {\bf 14}, 103001}~(2012).

\bibitem{Rosset2016}
Denis Rosset, Cyril Branciard, Tomer~Jack Barnea, Gilles P\"utz, Nicolas Brunner, and Nicolas Gisin.
\newblock ``Nonlinear bell inequalities tailored for quantum networks''.
\newblock \href{https://dx.doi.org/10.1103/PhysRevLett.116.010403}{Phys. Rev. Lett. {\bf 116}, 010403}~(2016).

\bibitem{wolfe2019inflation}
Elie Wolfe, Robert~W Spekkens, and Tobias Fritz.
\newblock ``The inflation technique for causal inference with latent variables''.
\newblock \href{https://dx.doi.org/10.1515/jci-2017-0020}{Journal of Causal Inference {\bf 7}, 20170020}~(2019).

\bibitem{renou2019limits}
Marc-Olivier Renou, Yuyi Wang, Sadra Boreiri, Salman Beigi, Nicolas Gisin, and Nicolas Brunner.
\newblock ``Limits on correlations in networks for quantum and no-signaling resources''.
\newblock \href{https://dx.doi.org/10.1103/PhysRevLett.123.070403}{Phys. Rev. Lett. {\bf 123}, 070403}~(2019).

\bibitem{Aberg2020}
Johan \AA{}berg, Ranieri Nery, Cristhiano Duarte, and Rafael Chaves.
\newblock ``Semidefinite tests for quantum network topologies''.
\newblock \href{https://dx.doi.org/10.1103/PhysRevLett.125.110505}{Phys. Rev. Lett. {\bf 125}, 110505}~(2020).

\bibitem{wolfe2021inflation}
Elie Wolfe, Alejandro Pozas-Kerstjens, Matan Grinberg, Denis Rosset, Antonio Acín, and Miguel Navascues.
\newblock ``Quantum inflation: A general approach to quantum causal compatibility''.
\newblock \href{https://dx.doi.org/10.1103/PhysRevX.11.021043}{Physical Review X{\bf 11}}~(2021).

\bibitem{Contreras20}
Patricia Contreras-Tejada, Carlos Palazuelos, and Julio~I. de~Vicente.
\newblock ``Genuine multipartite nonlocality is intrinsic to quantum networks''~(2020).
\newblock  \href{http://arxiv.org/abs/2004.01722}{arXiv:2004.01722}.

\bibitem{gisin2020constraints}
Nicolas Gisin, Jean-Daniel Bancal, Yu~Cai, Patrick Remy, Armin Tavakoli, Emmanuel~Zambrini Cruzeiro, Sandu Popescu, and Nicolas Brunner.
\newblock ``Constraints on nonlocality in networks from no-signaling and independence''.
\newblock \href{https://dx.doi.org/10.1038/s41467-020-16137-4}{Nature Communications {\bf 11}, 1--6}~(2020).

\bibitem{Ming_Luo_1}
Ya-Li Mao, Hu~Chen, Bixiang Guo, Shiting Liu, Zheng-Da Li, Ming-Xing Luo, and Jingyun Fan.
\newblock ``Certifying network topologies and nonlocalities of triangle quantum networks''.
\newblock \href{https://dx.doi.org/10.1103/PhysRevLett.132.240801}{Phys. Rev. Lett. {\bf 132}, 240801}~(2024).

\bibitem{Ming_Luo_2}
Ming-Xing Luo.
\newblock ``Fully device-independent model on quantum networks''.
\newblock \href{https://dx.doi.org/10.1103/PhysRevResearch.4.013203}{Phys. Rev. Res. {\bf 4}, 013203}~(2022).

\bibitem{Polino2023}
Emanuele Polino, Davide Poderini, Giovanni Rodari, Iris Agresti, Alessia Suprano, Gonzalo Carvacho, Elie Wolfe, Askery Canabarro, George Moreno, Giorgio Milani, Robert~W. Spekkens, Rafael Chaves, and Fabio Sciarrino.
\newblock ``Experimental nonclassicality in a causal network without assuming freedom of choice''.
\newblock \href{https://dx.doi.org/10.1038/s41467-023-36428-w}{Nature Communications{\bf 14}}~(2023).

\bibitem{gisin2019entanglement}
Nicolas Gisin.
\newblock ``Entanglement 25 years after quantum teleportation: testing joint measurements in quantum networks''.
\newblock \href{https://dx.doi.org/10.3390/e21030325}{Entropy {\bf 21}, 325}~(2019).

\bibitem{Renou_2019}
Marc-Olivier Renou, Elisa Bäumer, Sadra Boreiri, Nicolas Brunner, Nicolas Gisin, and Salman Beigi.
\newblock ``Genuine quantum nonlocality in the triangle network''.
\newblock \href{https://dx.doi.org/10.1103/physrevlett.123.140401}{Physical Review Letters{\bf 123}}~(2019).

\bibitem{Supic2022}
Ivan \ifmmode \check{S}\else \v{S}\fi{}upi\ifmmode~\acute{c}\else \'{c}\fi{}, Jean-Daniel Bancal, Yu~Cai, and Nicolas Brunner.
\newblock ``Genuine network quantum nonlocality and self-testing''.
\newblock \href{https://dx.doi.org/10.1103/PhysRevA.105.022206}{Phys. Rev. A {\bf 105}, 022206}~(2022).

\bibitem{Sekatski2023}
Pavel Sekatski, Sadra Boreiri, and Nicolas Brunner.
\newblock ``Partial self-testing and randomness certification in the triangle network''.
\newblock \href{https://dx.doi.org/10.1103/PhysRevLett.131.100201}{Phys. Rev. Lett. {\bf 131}, 100201}~(2023).

\bibitem{renou2022network}
Marc-Olivier Renou and Salman Beigi.
\newblock ``Network nonlocality via rigidity of token counting and color matching''.
\newblock \href{https://dx.doi.org/10.1103/PhysRevA.105.022408}{Physical Review A {\bf 105}, 022408}~(2022).

\bibitem{Boreiri2023}
Sadra Boreiri, Antoine Girardin, Bora Ulu, Patryk Lipka-Bartosik, Nicolas Brunner, and Pavel Sekatski.
\newblock ``Towards a minimal example of quantum nonlocality without inputs''.
\newblock \href{https://dx.doi.org/10.1103/PhysRevA.107.062413}{Phys. Rev. A {\bf 107}, 062413}~(2023).

\bibitem{krivachy_neural_2020}
Tam{\'a}s Kriv{\'a}chy, Yu~Cai, Daniel Cavalcanti, Arash Tavakoli, Nicolas Gisin, and Nicolas Brunner.
\newblock ``A neural network oracle for quantum nonlocality problems in networks''.
\newblock \href{https://dx.doi.org/10.1038/s41534-020-00305-x}{npj Quantum Information {\bf 6}, 1--7}~(2020).

\bibitem{Bell}
J.~S. Bell.
\newblock ``On the {E}instein {P}odolsky {R}osen paradox''.
\newblock \href{https://dx.doi.org/10.1103/PhysicsPhysiqueFizika.1.195}{Physics Physique Fizika {\bf 1}, 195--200}~(1964).

\bibitem{CHSH}
John~F Clauser, Michael~A Horne, Abner Shimony, and Richard~A Holt.
\newblock ``Proposed experiment to test local hidden-variable theories''.
\newblock \href{https://dx.doi.org/10.1103/PhysRevLett.23.880}{Physical review letters {\bf 23}, 880}~(1969).

\bibitem{CHSH_ex_2}
Alain Aspect, Philippe Grangier, and G\'erard Roger.
\newblock ``Experimental tests of realistic local theories via bell's theorem''.
\newblock \href{https://dx.doi.org/10.1103/PhysRevLett.47.460}{Phys. Rev. Lett. {\bf 47}, 460--463}~(1981).

\bibitem{CHSH_ex_3}
Alain Aspect, Jean Dalibard, and G\'erard Roger.
\newblock ``Experimental test of bell's inequalities using time-varying analyzers''.
\newblock \href{https://dx.doi.org/10.1103/PhysRevLett.49.1804}{Phys. Rev. Lett. {\bf 49}, 1804--1807}~(1982).

\bibitem{Hensen_2016}
B.~Hensen, N.~Kalb, M.~S. Blok, A.~E. Dréau, A.~Reiserer, R.~F.~L. Vermeulen, R.~N. Schouten, M.~Markham, D.~J. Twitchen, K.~Goodenough, D.~Elkouss, S.~Wehner, T.~H. Taminiau, and R.~Hanson.
\newblock ``Loophole-free bell test using electron spins in diamond: second experiment and additional analysis''.
\newblock \href{https://dx.doi.org/10.1038/srep30289}{Scientific Reports{\bf 6}}~(2016).

\bibitem{shalm2015strong}
Lynden~K Shalm, Evan Meyer-Scott, Bradley~G Christensen, Peter Bierhorst, Michael~A Wayne, Martin~J Stevens, Thomas Gerrits, Scott Glancy, Deny~R Hamel, Michael~S Allman, et~al.
\newblock ``Strong loophole-free test of local realism''.
\newblock \href{https://dx.doi.org/10.1103/PhysRevLett.115.250402}{Physical review letters {\bf 115}, 250402}~(2015).

\bibitem{Zeilinger_Bell}
Marissa Giustina, Marijn A.~M. Versteegh, S\"oren Wengerowsky, Johannes Handsteiner, Armin Hochrainer, Kevin Phelan, Fabian Steinlechner, Johannes Kofler, Jan-\AA{}ke Larsson, Carlos Abell\'an, Waldimar Amaya, Valerio Pruneri, Morgan~W. Mitchell, J\"orn Beyer, Thomas Gerrits, Adriana~E. Lita, Lynden~K. Shalm, Sae~Woo Nam, Thomas Scheidl, Rupert Ursin, Bernhard Wittmann, and Anton Zeilinger.
\newblock ``Significant-loophole-free test of bell's theorem with entangled photons''.
\newblock \href{https://dx.doi.org/10.1103/PhysRevLett.115.250401}{Phys. Rev. Lett. {\bf 115}, 250401}~(2015).

\bibitem{review}
Nicolas Brunner, Daniel Cavalcanti, Stefano Pironio, Valerio Scarani, and Stephanie Wehner.
\newblock ``Bell nonlocality''.
\newblock \href{https://dx.doi.org/10.1103/RevModPhys.86.419}{Rev. Mod. Phys. {\bf 86}, 419--478}~(2014).

\bibitem{rosset2017universal}
Denis Rosset, Nicolas Gisin, and Elie Wolfe.
\newblock ``Universal bound on the cardinality of local hidden variables in networks''.
\newblock \href{https://dx.doi.org/10.48550/arXiv.1709.00707}{Quantum Information \& Computation {\bf 18}, 910--926}~(2018).

\bibitem{Chaves2015}
Rafael Chaves, Christian Majenz, and David Gross.
\newblock ``Information{\textendash}theoretic implications of quantum causal structures''.
\newblock \href{https://dx.doi.org/10.1038/ncomms6766}{Nature Communications{\bf 6}}~(2015).

\bibitem{Fraser}
Thomas~C. Fraser and Elie Wolfe.
\newblock ``Causal compatibility inequalities admitting quantum violations in the triangle structure''.
\newblock \href{https://dx.doi.org/10.1103/PhysRevA.98.022113}{Phys. Rev. A {\bf 98}, 022113}~(2018).

\bibitem{Weilenmann_2018}
Mirjam Weilenmann and Roger Colbeck.
\newblock ``Non-shannon inequalities in the entropy vector approach to causal structures''.
\newblock \href{https://dx.doi.org/10.22331/q-2018-03-14-57}{Quantum {\bf 2}, 57}~(2018).

\bibitem{supic}
Ivan \ifmmode \check{S}\else \v{S}\fi{}upi\ifmmode~\acute{c}\else \'{c}\fi{}, Jean-Daniel Bancal, and Nicolas Brunner.
\newblock ``Quantum nonlocality in networks can be demonstrated with an arbitrarily small level of independence between the sources''.
\newblock \href{https://dx.doi.org/10.1103/PhysRevLett.125.240403}{Phys. Rev. Lett. {\bf 125}, 240403}~(2020).

\bibitem{Chaves2021}
Rafael Chaves, George Moreno, Emanuele Polino, Davide Poderini, Iris Agresti, Alessia Suprano, Mariana~R. Barros, Gonzalo Carvacho, Elie Wolfe, Askery Canabarro, Robert~W. Spekkens, and Fabio Sciarrino.
\newblock ``Causal networks and freedom of choice in bell's theorem''.
\newblock \href{https://dx.doi.org/10.1103/PRXQuantum.2.040323}{PRX Quantum {\bf 2}, 040323}~(2021).

\bibitem{Abiuso2022}
Paolo Abiuso, Tam\'as Kriv\'achy, Emanuel-Cristian Boghiu, Marc-Olivier Renou, Alejandro Pozas-Kerstjens, and Antonio Ac\'{\i}n.
\newblock ``Single-photon nonlocality in quantum networks''.
\newblock \href{https://dx.doi.org/10.1103/PhysRevResearch.4.L012041}{Phys. Rev. Research {\bf 4}, L012041}~(2022).

\bibitem{Pozas2023}
Alejandro Pozas-Kerstjens, Nicolas Gisin, and Marc-Olivier Renou.
\newblock ``Proofs of network quantum nonlocality in continuous families of distributions''.
\newblock \href{https://dx.doi.org/10.1103/PhysRevLett.130.090201}{Phys. Rev. Lett. {\bf 130}, 090201}~(2023).

\bibitem{Nelder}
J.~A. Nelder and R.~Mead.
\newblock ``{A Simplex Method for Function Minimization}''.
\newblock \href{https://dx.doi.org/10.1093/comjnl/7.4.308}{The Computer Journal {\bf 7}, 308--313}~(1965).

\bibitem{2020SciPy-NMeth}
Pauli Virtanen, Ralf Gommers, Travis~E. Oliphant, Matt Haberland, Tyler Reddy, David Cournapeau, Evgeni Burovski, Pearu Peterson, Warren Weckesser, Jonathan Bright, St{\'e}fan~J. {van der Walt}, Matthew Brett, Joshua Wilson, K.~Jarrod Millman, Nikolay Mayorov, Andrew R.~J. Nelson, Eric Jones, Robert Kern, Eric Larson, C~J Carey, {\.I}lhan Polat, Yu~Feng, Eric~W. Moore, Jake {VanderPlas}, Denis Laxalde, Josef Perktold, Robert Cimrman, Ian Henriksen, E.~A. Quintero, Charles~R. Harris, Anne~M. Archibald, Ant{\^o}nio~H. Ribeiro, Fabian Pedregosa, Paul {van Mulbregt}, and {SciPy 1.0 Contributors}.
\newblock ``{{SciPy} 1.0: Fundamental Algorithms for Scientific Computing in Python}''.
\newblock \href{https://dx.doi.org/10.1038/s41592-019-0686-2}{Nature Methods {\bf 17}, 261--272}~(2020).

\bibitem{mosek}
MOSEK ApS.
\newblock ``Mosek optimizer api for python 10.1.16''.
\newblock ~(2019).
\newblock  url:~\url{https://docs.mosek.com/latest/pythonapi/index.html}.

\end{thebibliography}

\newpage

\onecolumngrid

\appendix

\setcounter{theorem}{0}

\section{Approximate PTC rigidity}

\label{app: approximate PTC}

In a triangle-local model, consider variables $\alpha$, $\beta$, $\gamma$ sampled from the sets $\mathit{A,B,\Gamma}$ respectively, according to some probability distributions. Without loss of generality, we can assume that the model includes deterministic response functions $a(\beta,\gamma), b(\alpha,\gamma), c(\alpha,\beta)\in\{0,1\}$\footnote{For any indeterministic local response function at the parties we can simply move the indeterminacy to the sources $\alpha$, $\beta$, $\gamma$.}.\\

Approximate PTC rigidity states that, for any triangle-local model leading to a binary output distribution $P(a,b,c)$ satisfying the approximate parity condition $\text{Pr}(a \oplus b \oplus c = 1) \geq 1- \ve$, there exists a subset $\Upsilon \subset \mathit{A}\times \mathit{B}\times \mathit{\Gamma}$ of local variables with high probability $Pr(\bm \xi \in \Upsilon)\geq 1-O(\ve)$, within which it is feasible to define the token functions $T_\alpha,T_\beta,T_\gamma$ satisfying the PTC rigidity $a(\beta,\gamma)=T_\beta\oplus T_{\gamma}\oplus 1$ etc. We prove this here, in two main steps. In the first step, we characterize the subset $\Upsilon$ and show the consistent assignment of the token functions inside this subset. In the second step, we prove that this subset happens with high probability, specifically $Pr(\Upsilon)\geq 1-3\ve$. Moreover, in Section \ref{app:Tighter_ApproxPTC} we prove a slightly more involved and tighter lower bound, showing that $Pr(\Upsilon)\geq1-\fe = 1-\frac{5}{2}\ve +O(\ve^2)$.\\

\textbf{Step 1: Defining a subset $\Upsilon$ of local variables with consistent assignment of token functions}\\

Here our objective is to specify $\Upsilon \subset \mathit{A}\times \mathit{B}\times \mathit{\Gamma}$ a subset of local variables on which we can consistently assign the token functions $T_\alpha,T_\beta,T_\gamma$ satisfying the PTC condition\\

To do so, first let $\Lambda$ be the subset of $\mathit{A}\times \mathit{B}\times \mathit{\Gamma}$ in which the outputs satisfy the PTC condition, i.e.
\begin{align}
&a(\beta,\gamma)\oplus b(\alpha,\gamma) \oplus c(\alpha,\beta)= 1\quad  \forall \,\bm \xi =(\alpha,\beta,\gamma)\in\Lambda\\
& \text{Pr}(\Lambda) = \text{Pr}(\bm \xi \in \Lambda)= 1-\ve.
\end{align}

For each value $\gamma$ define the set 
\be
G_\gamma = \{(\alpha,\beta) | (\alpha,\beta,\gamma)\in \Lambda\} \subset \mathit{A}\times \mathit{B}
\ee
as the slice of $\Lambda$ for a fixed $\gamma$. The probabilities of these sets $\text{Pr}(G_\gamma) = \text{Pr}((\alpha,\beta)\in G_\gamma)$ satisfy the following equality 
\be
\mathds{E}(\text{Pr}(G_\gamma)) = \text{Pr}((\alpha,\beta,\gamma) \in \Lambda)=\text{Pr}(\Lambda) = 1-\ve,
\ee
where the expected value on the left-hand side is taken with respect to $\gamma$. Let us now identify the value $\gamma_*$ for which $\text{Pr}(G_\gamma)$ is maximal, i.e.
\begin{align}
\gamma_* = \text{argmax}_\gamma \text{Pr}(G_\gamma)\\
1-\ve_* = \max_\gamma \text{Pr}(G_\gamma).
\end{align}
By construction we have $\text{Pr}(G_\gamma) \leq 1-\ve_*$ for all $\gamma$, and $\ve_* \leq \ve$  since the maximum is larger than the average. To shorten the notation we dub $G_*= G_{\gamma_*}$.\\

On the set $G_*$ define the token functions
\be
T_\alpha := b(\alpha,\gamma_*) \qquad T_{\beta} := a(\beta,\gamma_*) 
\ee
as well as the value $T_{\gamma_*}=1$. It is easy to see that on the set $G_*\times\{\gamma_*\}\subset \Lambda$ these functions satisfy the PTC rigidity conditions given by 
\begin{align}
T_\beta\oplus T_{\gamma_*}\oplus 1 &=  a(\beta,\gamma_*) \\
T_\alpha\oplus T_{\gamma_*}\oplus 1 &= b(\alpha,\gamma_*)\\
T_\beta \oplus T_\alpha \oplus 1 &= a(\beta,\gamma_*) \oplus b(\alpha,\gamma_*)\oplus 1 = c(\alpha,\beta),
\end{align}
where we used $a(\beta,\gamma_*) \oplus b(\alpha,\gamma_*)\oplus c(\alpha,\beta)=1$ in the last line.\\

Now consider a different value $\gamma$ and the corresponding set $G_\gamma$. Furthermore, define the set
\be
S_\gamma = G_\gamma \cap G_* = \{(\alpha,\beta)|(\alpha,\beta)\in G_*, (\alpha,\beta,\gamma)\in \Lambda \}.
\ee
It is a subset of both $G_*$ and $G_\gamma$, hence each $(\alpha,\beta)\in S_\gamma$, both $(\alpha,\beta, \gamma)$ and $(\alpha,\beta, \gamma_*)$ are inside $\Lambda$. We denote
\be
1-\delta_\gamma=\text{Pr}\big(S_\gamma),
\ee
which must satisfy $\delta_\gamma\geq \ve_*$ because $S_\gamma\subset G_*$.

Remaining in the slice $\gamma$, we can define the following binary functions 
\begin{align}
t_\alpha^{z} &:= b(\alpha,\gamma)\oplus z \oplus 1\\
t^{z}_\beta &: = a(\beta,\gamma)\oplus z \oplus 1\\
t_\gamma^z &:= z.
\end{align}
For both choices $z=0$ and $z=1$, these functions satisfy the PTC conditions on the set $S_\gamma\times\{\gamma\}$
\begin{align}
t^{z}_\alpha\oplus t_{\gamma}^z\oplus 1 &= b(\alpha,\gamma) \oplus z \oplus 1 \oplus z  \oplus 1 = b(\alpha,\gamma) \\
t^{z}_\beta\oplus t_{\gamma}^z \oplus 1 &= a(\beta,\gamma)\oplus z \oplus 1 \oplus z  \oplus 1  = a(\beta,\gamma) \\
t_\alpha^\gamma \oplus t_\beta^\gamma\oplus 1 &= b(\alpha,\gamma)\oplus a(\beta,\gamma)\oplus 1 = c(\alpha,\beta),
\end{align}
since $(\alpha,\beta,\gamma)\in \Lambda$. However, depending on the value of $z$, the functions $t^{z}_\alpha$ and $t^{z}_\beta$ might not agree with the token functions $T_\alpha$ and $T_\beta$ defined previously. More precisely, let us now define the 
\begin{align}
\mathit{A}_\gamma^z &=\{ \alpha | T_\alpha \oplus t_\alpha^z =0\}\\
\mathit{B}_\gamma^z &=\{ \alpha | T_\beta \oplus t_\beta^z =0\},
\end{align}
such that $\mathit{A}_\gamma^0$ with $\mathit{A}_\gamma^1$ and $\mathit{B}_\gamma^0$ with $\mathit{B}_\gamma^1$ are disjoint, and $S_\gamma \subset (\mathit{A}_\gamma^0 \cup \mathit{A}_\gamma^1)\times (\mathit{B}_\gamma^0 \cup \mathit{B}_\gamma^1)$. In fact for any point $(\alpha,\beta) \in S_\gamma$ it is true that 
\be
T_\alpha \oplus t_\alpha^z \oplus T_\beta \oplus t_\beta^z = (T_\alpha \oplus T_\beta)\oplus (t_\alpha^z\oplus t_\beta^z) = (c(\alpha,\beta)\oplus 1)\oplus (c(\alpha,\beta)\oplus 1) =0,
\ee
because it is both in $G_*$ and $G_\gamma$. Hence, any such point is either in $\mathit{A}_\gamma^0\times \mathit{B}_\gamma^0$ or $\mathit{A}_\gamma^1\times \mathit{B}_\gamma^1$, and we conclude that
\be
S_\gamma \subset (A_\gamma^0\times B_\gamma^0) \cup (A_\gamma^1\times B_\gamma^1).
\ee
This allows us to split $S_\gamma= S_\gamma^0\cup S_\gamma^1$ in two disjoint sets with $S_\gamma^z \subset \mathit{A}_\gamma^z\times \mathit{B}_\gamma^z$. Note that for the choice $t_\gamma=z$ the token functions $T_\alpha$ and $T_\gamma$ are PTC on $S_\gamma^z\times \{\gamma\}$. Our goal is of course to pick the value of $z$ for which $\text{Pr}(S_\gamma^z)$ is maximized. Let us define this probability 
\be
\hat P_\gamma = \max_z \text{Pr}(S_\gamma^z) \leq \Pr(S_\gamma) = 1-\delta_\gamma.
\ee
We will now lower-bound this probability. First, note that there is a trivial bound given by $\hat P_\gamma\geq \frac{1-\delta_\gamma}{2}$. Furthermore, we have
\be
\hat P_\gamma  = \text{Pr}(S_\gamma) - \min_z \text{Pr}(S^z_\gamma) \geq 1-\delta_\gamma - \min_z \text{Pr}(A^z_\gamma\times B_\gamma^z).   
\ee
To this note that for $z'= \text{argmin}_z \text{Pr}(S^z_\gamma)$ we have $\text{Pr} (S_\gamma^{z'})\leq \text{Pr} (\mathit{A}_\gamma^{z'}\times \mathit{B}_\gamma^{z'})$, and $\text{Pr} (S_\gamma^{z'})\leq \text{Pr} (S_\gamma^{z'\oplus 1}) \leq \text{Pr} (\mathit{A}_\gamma^{z'\oplus 1}\times \mathit{B}_\gamma^{z'\oplus 1})$ (since $S^z_\gamma \subset \mathit{A}^z_\gamma\times \mathit{B}_\gamma^z$).
In addition, we know that 
\be\label{app: conscosn}
1-\delta_\gamma = P_\gamma = \text{Pr} (S_\gamma^0) + \text{Pr} (S_\gamma^1)
\leq \text{Pr}(\mathit{A}_\gamma^0 \times \mathit{B}_\gamma^0) +\text{Pr}(\mathit{A}_\gamma^1 \times \mathit{B}_\gamma^1).
\ee
Denoting $p_z = \text{Pr}(\mathit{A}_\gamma^z)$ and $q_z = \text{Pr}(\mathit{A}_\gamma^z)$ with $p_0+p_1, q_0+q_1 \leq 1$, we want to upper bound $\min_z \text{Pr}(\mathit{A}^z_\gamma\times \mathit{B}_\gamma^z)$ subject to the constraint in Eq.~\eqref{app: conscosn}. This is equivalent to solving the following min-max problem
\begin{equation} \begin{split}
    \max_{p_0,p_1,q_0,q_1} &\min \, \{p_0 q_0, p_1 q_1\} \\
    \text{such that} &\quad p_0+p_1, q_0+q_1 \leq 1 \\
    & \quad p_0,p_1,q_0,q_1 \geq 0 \\
    & \quad p_0 q_0 + p_1 q_1 \geq 1-\delta_\gamma.    
\end{split}
\label{eq app: minimi}
\end{equation}
In Section \ref{app: sec minimax} we solve this optimization, proving the following bound 
\be
\min_z \text{Pr}(\mathit{A}_\gamma^z \times \mathit{B}_\gamma^z) \leq  \left(\frac{1-\sqrt{1-2 \delta_\gamma }}{2}\right)^2\quad \implies \quad \hat P_\gamma \geq 1- \delta_\gamma -\left(\frac{1-\sqrt{1-2 \delta_\gamma }}{2}\right)^2.
\ee
Recalling that $\hat P_\gamma$ must also satisfy $\hat P_\gamma \geq \frac{1-\delta_\gamma}{2}$ we obtain the desired lower bound
\be
\max_z \text{Pr}(S_\gamma^z)  \geq f(\delta_\gamma) := \max  \left\{1-\delta_\gamma - \left(\frac{1-\sqrt{1-2 \delta_\gamma }}{2}\right)^2, \frac{1-\delta_\gamma}{2} \right\}= \begin{cases}
1-\delta_\gamma - \left(\frac{1-\sqrt{1-2 \delta_\gamma }}{2}\right)^2 & \delta_\gamma \leq \frac{1}{2} \\
\frac{1-\delta_\gamma}{2} & \delta_\gamma >\frac{1}{2}
\end{cases}
\ee
Let us now define the set $\hat S_\gamma = S_\gamma^{\hat z(\gamma)}$ with $\hat z(\gamma) = \text{argmax}_z \text{Pr}(S_\gamma^z)$ to be the branch for which the probability is maximal, and prolong the token function 
\be
T_\gamma = \hat z(\gamma)
\ee
on the whole set (if $\hat S_\gamma$ is empty one can assign any value to $T_\gamma$). By taking the union of all the $\gamma$-slices we can now define the set
\be
 \Upsilon = \bigcup_\gamma \hat S_\gamma \times\{\gamma\}= \{(\alpha,\beta,\gamma) | (\alpha,\beta)\in \hat S_\gamma\}
\ee
on which we have shown a consistent assignment of the token functions $T_\alpha,T_\beta,T_\gamma$ fulfilling the PTC condition (for $\gamma=\gamma_*$ we, of course, have $\hat S_{\gamma^*} =G_*$).\\

\textbf{Step 2: proving $\text{Pr}(\Upsilon) \geq 1-3\ve$.}\\

Finally, we have to lower bound the probability  $\text{Pr}(\Upsilon)$. To do so we first consider the set
\be
\Omega = \bigcup_\gamma S_\gamma \times\{\gamma\}= \{(\alpha,\beta,\gamma) | (\alpha,\beta)\in S_\gamma \} = \{(\alpha,\beta,\gamma) | (\alpha,\beta)\in G_*,  (\alpha,\beta,\gamma)  \in \Lambda \}
\ee
for which the probability is easy to bound
\be
\text{Pr}( \Omega) =1 - \text{Pr}(\bm \xi \notin \Lambda \text{ or } (\alpha,\beta)\notin G_*)
\geq 1- \text{Pr}(\bm \xi \notin \Lambda) - \text{Pr}((\alpha,\beta)\notin G_*)
\geq 1-\ve -\ve_*.
\ee
The probabilities $\text{Pr}(\Omega)$ and $\text{Pr}(\Upsilon)$ can also be expressed as expected values over $\gamma$
\begin{align}\label{app: bound lower Upsilon}
\text{Pr}( \Omega) &=  \mathds{E}(\text{Pr}(S_\gamma)) =\mathds{E}(1-\delta_\gamma) \geq 1-\ve-\ve_*\\
\text{Pr}(  \Upsilon) &=  \mathds{E}(\text{Pr}(\hat S_\gamma))\geq  \mathds{E}(f(\delta_\gamma)).
\end{align}\\
\begin{figure}
    \centering
    \includegraphics{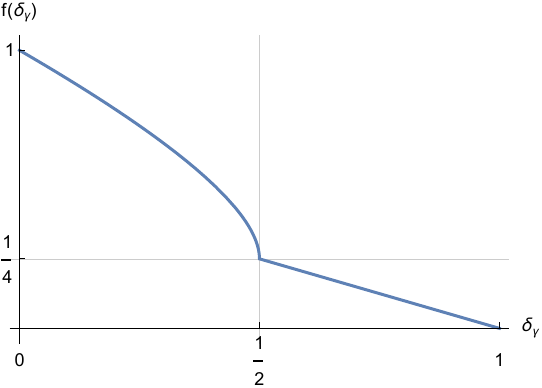}
    \caption{The function $f(\delta_\gamma)$ evaluated in the interval $[0,1]$. The concavity of the function can be observed in the $[0,1/2]$ interval.}
    \label{fig:f}
\end{figure}

The last step is to minimize the rhs of the last inequality under the constraint of Eq.~\eqref{app: bound lower Upsilon}, that is 
\begin{align}
\min \quad &\mathds{E} (f(\delta_\gamma)) \\
\text{such that} \quad & \mathds{E} (\delta_\gamma) \leq \ve+\ve_* \\
 \quad & \delta_\gamma \geq \ve_*,
\end{align}
where the last line comes from $\text{Pr}(S_\gamma) \leq \text{Pr}(G_\gamma) \leq \text{Pr}(G_*)= 1-\ve_*$. The function $f(\delta_\gamma)$ is depicted in Fig.~\ref{fig:f}. We can relax this minimization to a trivial one if we find a convex decreasing lower bound on the goal function $f(\delta_\gamma) \geq f_\text{con}(\delta_\gamma)$ on the interval $\delta_\gamma \in [\ve_*,1]$, since such a function must satisfy 
\be
\mathds{E} (f(\delta_\gamma))\geq \mathds{E} (f_\text{con}(\delta_\gamma)) \geq f_\text{con}(\mathds{E}(\delta_\gamma))  \geq  f_\text{con}(\ve+\ve_*).
\ee
Note that the function $f(\delta_\gamma)$ is concave on the interval $[0,1/2]$, which follows from the negativity of its second derivative
\be
f''(\delta_\gamma) = -\frac{1}{2 (1-2 \delta_\gamma )^{3/2}} < 0.
\ee

It follows that a convex lower bound on the whole interval $\delta_\gamma \in [0,1]$ is provided by a linear function
\be
f(\delta_\gamma)\geq f_\text{con}(\delta_\gamma)= 1-\frac{3}{2} \delta_\gamma,
\ee
which results in the following bound
\be
\text{Pr}( \Upsilon) \geq \mathds{E} (f(\delta_\gamma)) \geq 1- \frac{3}{2} (\ve+\ve_*)\geq 1-3\ve,
\ee
by $\ve_*\leq \ve$. In the next Section, we derive a tighter upper bound which uses the fact that $\delta_\gamma\geq \ve_*$. 

\subsection{Tighter bound on $\text{Pr}(\Upsilon)$}
\label{app:Tighter_ApproxPTC}

A tight convex lower bound on $f(\delta_\gamma)$ depicted in Fig.~\ref{fig:f} on the interval $[\ve_*,1]$ is given by the piecewise linear function which connects the points $\{\ve_*, f_*=f(\ve_*)\}$ with $\{1/2,1/4\}$ for $\delta_\gamma\in[\ve_*,1/2]$ and the points $\{1/2,1/4\}$ with $\{1,0\}$ for 
$\delta_\gamma\in[1/2,1]$. Formally, this lower bound is given by 
\be
f^{\ve_*}_\text{con}(\delta_\gamma)  = \begin{cases} \frac{2 f_*-\ve_*}{2-4\ve_*} - \frac{4 f_* - 1}{2-4 \ve_*} \delta_\gamma
 & \delta_\gamma \in [\ve_*,1/2] \\
\frac{1-\delta_\gamma}{2} & \delta_\gamma \in (1/2,1]
\end{cases}
\ee
This implies a tight lower bound
\begin{align}
\min \quad &\mathds{E} (f(\delta_\gamma)) \qquad  \geq f^{\ve_*}_\text{con}(\ve+\ve_*)\\
\text{such that} \quad & \mathds{E} (\delta_\gamma) \leq \ve+\ve_* \\
 \quad & \delta_\gamma \geq \ve_*,
\end{align}
and $\text{Pr}(\Upsilon) \geq f^{\ve_*}_\text{con}(\ve+\ve_*)$ for any scenario with $\text{Pr}(G_*)=1-\ve_*$. But since $\ve_*$ is not known, we need to take the worst-case lower bound
\be
\text{Pr}(\Upsilon)\geq 1-\fe := \min_{\ve_*\in[0,\ve]} f^{\ve_*}_\text{con}(\ve+\ve_*)
\ee
Finally, let us show that
\be
1-\fe= f^{\ve}_\text{con}(2\ve)=
\begin{cases}
 \frac{1}{2}-\ve +\frac{1-4\ve}{2 \sqrt{1-2 \ve }} & \ve \leq \frac{1}{4}\\
\frac{1-2\ve}{2}  & \ve >\frac{1}{4} 
\end{cases}
\ee

This follows from the fact that the function $ f^{\ve_*}_\text{con}(\ve+\ve_*)$ is decreasing in $\ve_*$ hence the minimum is reached by setting $\ve_*=\ve$. Indeed, its derivative 
\be
\frac{\partial}{\partial \ve_*} f^{\ve_*}_\text{con}(\ve+\ve_*) =
\begin{cases}
-\frac{\left(1+\sqrt{1-2 \ve_*}\right) (1-2\ve_*)-2 \ve }{2 (1-2 \ve_*)^{3/2}} & \ve+\ve_* \leq\frac{1}{2}\\
-\frac{1}{2} & \ve+\ve_* >\frac{1}{2}
\end{cases}
\ee
is negative. This concludes the proof of the lower bound
\be
\text{Pr}(\Upsilon)\geq 1-\fe \geq 1-3 \ve,
\ee
illustrated in Fig.~\ref{fig: LB}. As a final remark, consider the first order expansion $1-\fe = 1-\frac{5}{2}\ve +O(\ve^2)$ to see that for very small $\ve$ the two lower bounds are quite different.

\begin{figure}
    \centering
    \includegraphics{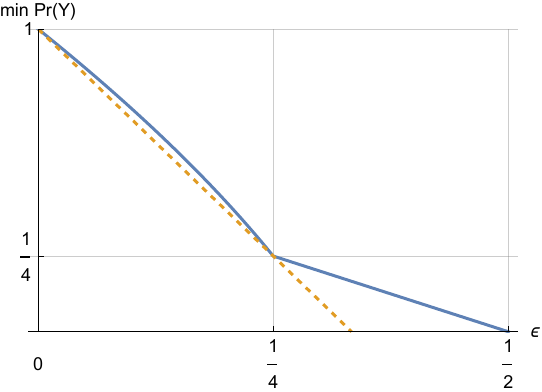}
    \caption{The lower bounds $\fe$ (full line) and $1-3 \ve$ (dashed line) on the probability  $\text{Pr}(\Upsilon)$ of the PTC set as functions of the probability $\ve= \text{Pr}(a\oplus b\oplus c \neq 1)$ to observe the wrong parity of the outcomes.}
    \label{fig: LB}
\end{figure}

\subsection{Solution of the optimization in Eq.~\eqref{eq app: minimi}}

\label{app: sec minimax}
Here we solve the optimization problem 
\begin{equation} \begin{split}
    \max_{p_0,p_1,q_0,q_1} &\min \, \{p_0 q_0, p_1 q_1\} \\
    \text{such that} &\quad p_0+p_1, q_0+q_1 \leq 1 \\
    & \quad p_0,p_1,q_0,q_1 \geq 0 \\
    & \quad p_0 q_0 + p_1 q_1 \geq 1-\delta_\gamma.    
\end{split}
\end{equation}
Without loss of generality consider the case $p_0 q_0\leq p_1 q_1$, and rewrite the problem as 
\begin{equation} \begin{split}
    \max_{p_0,p_1,q_0,q_1} & \quad p_0 q_0 \\
    \text{such that} &\quad p_0 q_0 \leq p_1 q_1
   \\ 
   & \quad p_0 q_0  \geq 1-\delta_\gamma - p_1 q_1\\
    &\quad p_0\leq 1- p_1 \\
    & \quad q_0\leq 1 - q_1 \\
    & \quad p_0,p_1,q_0,q_1 \geq 0 .\\
\end{split}
\end{equation}
The variables $p_1$ and $q_1$ now do not appear in the goal function. Furthermore, the constraints on $p_0$ and $q_0$ are easier to satisfy with larger values of $p_1$, $q_1$ and $p_1 q_1$. Hence the maximum is attained when these auxiliary variables take the maximal values $p_1=1-p_0$ and $q_1=1-p_0$. This allows us to rewrite the maximization in terms of two variables $p=p_0$ and $q=q_0$ as follows 
\begin{equation} \begin{split}
    \max_{p,q\in [0,1]} & \quad p\, q \\
    \text{such that} &\quad p\, q \leq (1-p)(1-q)
   \\ 
   & \quad p \, q  \geq 1-\delta_\gamma - (1-p)(1-q)
\end{split}
\qquad \text{which can be rewritten as}\qquad 
\begin{split}
    \max_{p,q\in [0,1]} & \quad p\, q \\
    \text{such that} &\quad p+q \leq 1 
   \\ 
   & \quad 2\, p \, q  \geq p+q-\delta_\gamma 
   \end{split}.
\end{equation}

Let us introduce the variables $g =\sqrt{p\, q}$ and $a =\frac{p+q}{2}$ which are the geometric and the algebraic means of $p,q$. With their help, we write the following optimization and they are equivalent as knowing $a,g$ gives $p,q$ as well with $p,q = a \pm \sqrt{a^2-g^2}$
\begin{equation} \begin{split}
    \max_{g,a} & \quad g^2 \\
    \text{such that} &\quad a \leq \frac{1}{2}
   \\ 
   & \quad g^2\geq a -\frac{1}{2} \delta_\gamma ,
    \\
    & \quad a \geq g.
\end{split}
\end{equation}
The final constraint arises because the algebraic and geometric means satisfy the inequality $a\geq g$, and guarantees that $p$ and $q$ are well defined, which can be saturated $a=g$  by the choice $p=q$. Note that for a fixed free variable $a$, the choice $g=a$ is the one that maximizes the goal function and relaxes the constraints the most. We can thus set $a=g$ without loss of generality in order to obtain
\begin{equation} \begin{split}
    \max_{g\in[0,\frac{1}{2}]} & \quad g^2 \\
    \text{such that} 
   & \quad g^2- g +\frac{1}{2} \delta_\gamma \geq 0.
\end{split}
\end{equation}
Here $g^2- g +\frac{1}{2} \delta_\gamma$ is a decreasing function of $g \in [0,1/2]$. It admits a zero at $g=\frac{1}{2} \left(1-\sqrt{1-2 \delta_\gamma }\right)$ inside the interval, and is thus negative for larger values of $g$. We conclude that this is precisely the value maximizing $g$ and $g^2$ under the constraint. Hence, we have shown that
\begin{equation} \begin{split}
    \max_{p_0,p_1,q_0,q_1} &\min \, \{p_0 q_0, p_1 q_1\} \qquad \quad = \left( \frac{1-\sqrt{1-2 \, \delta_\gamma }}{2}\right)^2\\
    \text{such that} &\quad p_0+p_1, q_0+q_1 \leq 1 \\
    & \quad p_0,p_1,q_0,q_1 \geq 0 \\
    & \quad p_0 q_0 + p_1 q_1 \geq 1-\delta_\gamma.    
\end{split}
\end{equation}

\section{The noisy W-distribution}
\label{app: noisy W}
In this Section, we illustrate Result 2 and discuss the nonlocality of noisy W-distributions. Such distributions $P_W^{(\ve,\delta)}$ are given by 
\begin{equation}\label{eq: W dist}
P_W^{(\ve,\delta)}(100)=\frac{1-\ve-\delta}{3} \circlearrowleft, \qquad P_W^{(\ve,\delta)}(111)=\delta
\end{equation}
and the other terms are left unspecified. As mentioned in the main text the $\circlearrowleft$ symbol means the equation is valid up to cyclic permutations of the parties. It satisfies the PTC condition with probability $1-\ve$. Hence, by our Result 2 we know that there must exist a local PTC distribution $P(a,b,c)$ such that $\delta(P_W^{(\ve,\delta)},P)\leq \fe$. With straightforward algebra, this condition can be rewritten as
\be
\Delta^{(\ve,\delta)}(P) = \sum_{(a,b,c)= (100),(010),(001)} \left|\frac{1-\ve-\delta}{3}-P(a,b,c)\right| + |\delta -P(111)|-2\fe \leq 0.
\ee
Now, the set of the PTC distributions can be parameterized with the three ``token probabilities'' $p_\xi(0)=q_\xi \in [0,1]$ leading to 
\begin{align}
    P(111) &= q_\alpha q_\beta q_\gamma + (1-q_\alpha) (1-q_\beta) (1-q_\gamma) \\
    P(100) &= (1-q_\alpha) q_\beta q_\gamma + q_\alpha (1-q_\beta) (1-q_\gamma)\qquad \circlearrowleft.
\end{align}
Therefore the implication of result 2 is equivalent to 
\be\label{app eq: minim}
\Delta_{min}^{(\ve,\delta)} = \min_{q_\alpha, q_\beta, q_\gamma} \Delta^{(\ve,\delta)}(q_\alpha q_\beta q_\gamma) \leq 0,
\ee
since there must exist a combination of $q_\alpha,q_\beta,q_\gamma$ for which $\Delta^{(\ve,\delta)}(P) \leq 0$ holds. To prove that a distribution $P_W^{(\ve,\delta)}$ is not triangle-local it is thus sufficient to show that $\Delta_{min}^{(\ve,\delta)}>0$. 

\begin{figure}
    \centering
    \includegraphics[width=0.5\columnwidth]{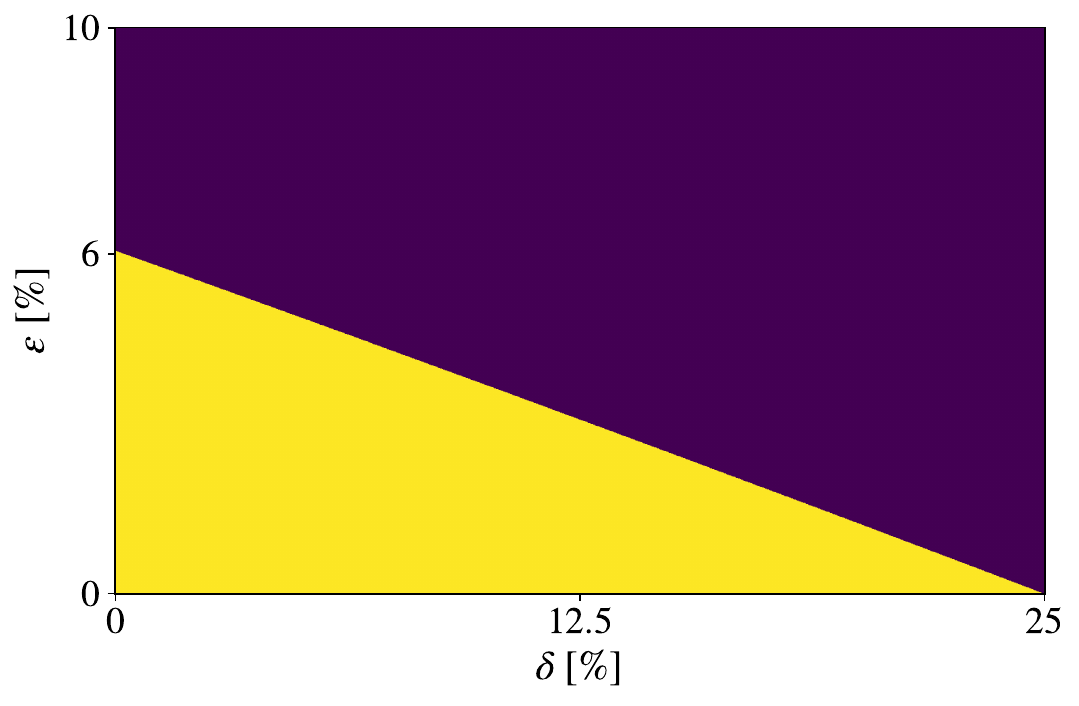}
    \caption{In yellow we show the parameter region $(\ve,\delta)$ where the noisy W-distributions $P_W^{(\ve,\delta)}(a,b,c)$ in Eq.~\eqref{eq: W dist} appear to be nonlocal.}
    \label{fig:W}
\end{figure}

To perform the minimization in Eq.~\eqref{app eq: minim}, we used the Nelder-Mead or simplex search method \cite{Nelder} via the Python library SciPy \cite{2020SciPy-NMeth}. The results of the minimization are reported in Fig.~\ref{fig:W}. Note that this is an empirical minimization algorithm, which does not provide a convergence guarantee.
Note that the goal function $\Delta^{(\ve,\delta)}(q_\alpha, q_\beta, q_\gamma)$ is a smooth and well-behaved function of three variables on a compact interval; therefore, it is plausible that the solver has found its global minimum. Nevertheless, it should be stressed that this only proves the nonlocality of a given distribution under the assumption that the global minimum was indeed found. In the future, it is desirable to perform the same minimization with more computational resources and the guarantee of convergence. This can e.g. be done with the help of the branch-and-bound algorithm, given that the goal function is Lipschitz-continuous on a compact three-parameter subset.

\section{Linear Program for TBSM when PTC is satisfied exactly} \label{app: LP excact PTC}

\begin{definition}
    For a binary tuple of outputs $\bm x = (a,b,c)$, and a binary tuple of tokens $\bm t=(t_{\alpha}, t_{\beta}, t_{\gamma})$, we define the PTC response function as follows:

    \begin{align*}
        P_{\scriptscriptstyle PTC}(\bm x|\bm t) = \begin{cases}
1 & a = t_{\beta} \oplus t_{\gamma} \oplus 1 , \ b = t_{\gamma} \oplus t_{\alpha} \oplus 1 \ , c = t_{\alpha} \oplus t_{\beta} \oplus 1\\
0 & o.w.
\end{cases}.
    \end{align*}
\end{definition}

This function is, indeed, an indicator function of whether the outputs $\bm x$ and tokens $\bm t$ are consistent. Note that clearly for any $\bm t$ there is only one consistent $\bm x$, and more interestingly for any PTC tuple $\bm x = (a,b,c)$, i.e. $a \oplus b \oplus c = 1$, there are exactly two consistent tokens $\bm t$, which are flipped of each other. For example for $\bm x =(1,1,1)$ we find $\bm t =(1,1,1)$, or $\bm t =(0,0,0)$.\\

\textbf{Noiseless case:}  Nonlocality of $P(\bm x_1 \bm x_2)$ can be checked by the infeasibility of the following linear program with
8 probability distributions (64 variables) $P(\bm x_2|\bm t)$ , where we use the compact notation $\bm x_i=(a_i,b_i,c_i)$, $\bm t =(t_\alpha, t_\beta, t_\gamma)$ 

\be\label{app:LP_noiseless}\begin{split}
& (i) \,P(\bm x_1,\bm x_2) =\sum_{\bm t} p(\bm t) P_{\scriptscriptstyle PTC}(\bm x_1|\bm t) P(\bm x_2|\bm t)\\
&(ii) \  \sum_{b_2,c_2}  {P(a_2,b_2,c_2|0,t_{\beta}, t_{\gamma})} - \sum_{b_2,c_2}P(a_2,b_2,c_2|1, t_{\beta}, t_{\gamma}) = 0 \quad \circlearrowleft
\end{split}
\ee

Note that as mentioned before, for any PTC tuple of $\bm x_1$, there are exactly two consistent $\bm t$; therefore the sum in $(i)$, is only over two terms.\\

\textbf{Dephasing noise:} 
the local dephasing noise given by the channel $\cE_d: \ketbra{\psi} \mapsto \varrho$ with
\be\label{eq: dephasing}
\varrho= (1-d) \ketbra{\psi} + d \left(\lambda_0^2 \ketbra{01} +  \lambda_1^2 \ketbra{10}
\right)\ee

where $\ket{\psi} =\lambda_0\ket{01} + \lambda_1\ket{10}$. Note that the resulting distribution $P$ in the presence of the dephasing noise is still a PTC distribution and can be written as

$$P = (1-d)^3 P_{Q} + (1-(1-d)^3) P_{C}$$

Where $P_{Q}$ is the quantum distribution obtained without the presence of dephasing noise($d=0$), and $P_{C}$ is the classical distribution($d=1$), therefore we have, $P= P_{Q} + (P_{C}-P_{Q}) e$ for $e := (1-(1-d)^3)$ and using \eqref{app:LP_noiseless} finding the minimum amount of dephasing noise $d$ for which the distribution becomes local can be written as 

\begin{align}
\text{Minimize \ } e\\
\text{s.t. \ } \ 
&(i) \ P_Q(\bm x_1 \bm x_2) = \sum_{\bm t }{ p(\bm t) P_{\scriptscriptstyle PTC}(\bm x_1|\bm t) P(\bm x_2|\bm t)} + \big(P_Q(\bm x_1 \bm x_2) - P_C(\bm x_1 \bm x_2) \big) e \\
&(ii) \  \sum_{b_2,c_2}  {P(a_2,b_2,c_2|0,t_{\beta}, t_{\gamma})} - \sum_{b_2,c_2}P(a_2,b_2,c_2|1, t_{\beta}, t_{\gamma}) = 0 \quad \circlearrowleft
\end{align}

Solving this LP, we find the optimum value $e^*$, which translates to the optimum dephasing noise $d^* = 1+\sqrt[3]{e^*-1}$. We used the MOSEK Optimizer API for Python to solve the Linear Program \cite{mosek}.

\section{Linear Program for noisy TBSM distributions}

\label{app: noisy TBSM}

\begin{lemma} \label{app:lemma1}
Let $p(x)$ and $q(x)$ be probability distributions defined on the discrete probability space $(\Omega,\mathcal{F})$. The following statements are equivalent: 

\begin{align*}
 &(i) \ \text{There exist probability distributions $p'$ and $q'$ on $(\Omega,\mathcal{F})$ such that }  p(x) = q(x) + \ve p'(x) - \ve q'(x)\\
 &(ii) \ \delta_{TV}(p,q) \leq \ve
\end{align*}

\end{lemma}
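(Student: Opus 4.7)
The statement is a standard equivalence between the total variation distance and the existence of a balanced probability-distribution decomposition of the difference $p-q$. I would prove both directions separately.

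For the direction $(i)\Rightarrow (ii)$, the plan is a direct calculation. Starting from $p(x)=q(x)+\ve p'(x)-\ve q'(x)$, I would rearrange to $p(x)-q(x)=\ve(p'(x)-q'(x))$ and then apply the definition of total variation distance:
\begin{equation}
\delta_{TV}(p,q)=\tfrac{1}{2}\sum_x|p(x)-q(x)|=\ve\cdot\tfrac{1}{2}\sum_x|p'(x)-q'(x)|=\ve\,\delta_{TV}(p',q')\leq\ve,
\end{equation}
where the final inequality uses the elementary fact that the total variation distance between any two probability distributions is at most $1$.

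For the direction $(ii)\Rightarrow (i)$, the plan is constructive. Let $D(x):=p(x)-q(x)$, and decompose it into its positive and negative parts, $D^+(x)=\max(D(x),0)$ and $D^-(x)=\max(-D(x),0)$, so that $D=D^+-D^-$. Since $\sum_x D(x)=0$, the two one-sided parts have equal mass $s:=\sum_x D^+(x)=\sum_x D^-(x)$, and by the definition of total variation distance this common mass satisfies $s=\delta_{TV}(p,q)\leq\ve$. If $s=\ve$, I can simply set $p'(x)=D^+(x)/\ve$ and $q'(x)=D^-(x)/\ve$, both of which are valid probability distributions, and verify the identity in $(i)$ by construction.

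The only subtlety, which I expect to be the minor technical point of the proof, is handling the case $s<\ve$: then $D^+/\ve$ and $D^-/\ve$ do not sum to $1$, so they fail to be probability distributions. The fix is to pad both sides with a common probability distribution $r(x)$ (any choice works, e.g.\ the uniform one): define
\begin{equation}
p'(x)=\tfrac{1}{\ve}\bigl(D^+(x)+(\ve-s)\,r(x)\bigr),\qquad q'(x)=\tfrac{1}{\ve}\bigl(D^-(x)+(\ve-s)\,r(x)\bigr).
\end{equation}
Each is nonnegative and sums to $(s+(\ve-s))/\ve=1$, so both are probability distributions. The common padding term $(\ve-s)r(x)$ cancels in $\ve p'(x)-\ve q'(x)=D^+(x)-D^-(x)=D(x)$, recovering $p(x)=q(x)+\ve p'(x)-\ve q'(x)$. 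This completes both implications. No further obstacles are expected; the whole argument is essentially the Jordan decomposition of a signed measure with total mass zero, together with a trivial mass-balancing trick.
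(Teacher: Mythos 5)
Your proof is correct, and the forward direction $(i)\Rightarrow(ii)$ is word-for-word the paper's computation. For the converse you take a genuinely different route. The paper sets $r=p-q$ and constructs $p'(x)=|r(x)|/\sum_x|r(x)|$, $q'(x)=p'(x)-r(x)/\ve$; you instead use the Jordan decomposition $p-q=D^+-D^-$, note that the common one-sided mass equals $s=\delta_{TV}(p,q)\le\ve$, and pad both parts by $(\ve-s)$ times an arbitrary reference distribution. Your construction is the more careful of the two: it produces valid (nonnegative, normalized) $p'$ and $q'$ for every $s\le\ve$, including the degenerate case $p=q$. By contrast, the paper's $q'$ is nonnegative only when $\sum_x|r(x)|\le\ve$, which under the factor-$\tfrac12$ normalization of $\delta_{TV}$ used in its own $(i)\Rightarrow(ii)$ step amounts to $\delta_{TV}(p,q)\le\ve/2$ rather than $\le\ve$ (and the construction is undefined when $r\equiv 0$). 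So your Jordan-decomposition-plus-padding argument not only proves the lemma but quietly repairs a factor-of-two gap and an edge case in the paper's version; no further work is needed.
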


\begin{proof}
$(i)\Rightarrow (ii):$
  \begin{align*}
      \delta_{TV}(p,q) = \frac{1}{2} \sum_x |p(x) - q(x)| = \frac{\ve}{2} \sum_x |p'(x) - q'(x)| = \ve \ \delta_{TV}(p',q') \leq \ve
  \end{align*}

  $(ii)\Rightarrow (i):$
  let's define $A\subset\Omega$ as the subset within which $p(x) \geq q(x)$, then $(ii)$ implies that:
  $$ \sum_{x\in A} \big((p(x)-q(x))\big) + \sum_{x\in \bar A} \big((q(x)-p(x))\big)  \leq 2 \ve$$
  let's say $ \sum_{x\in A} \big((p(x)-q(x))\big) + \sum_{x\in \bar A} \big((q(x)-p(x))\big)  = 2 \ve'$ for $\ve' \leq \ve$. Moreover, we have $ \sum_{x\in A} \big((p(x)-q(x))\big) - \sum_{x\in \bar A} \big((q(x)-p(x))\big)  = \sum_{x\in \Omega} \big((p(x)-q(x))\big) =0$, therefore 
  \begin{equation}
      \sum_{x\in A} \big((p(x)-q(x))\big) = \sum_{x\in \bar A} \big((q(x)-p(x))\big)  =  \ve'
  \end{equation}

  Let's define $p'(x)$ and $q'(x)$ as follows:
  \begin{align*}
    p'(x)=\begin{cases}
\frac{p(x)-q(x)}{\ve}+\frac{\ve-\ve'}{\ve}f(x) & x \in A\\ 
\frac{\ve-\ve'}{\ve}f(x) & x \in \bar A
\end{cases}\\
q'(x)=\begin{cases}
\frac{\ve-\ve'}{\ve}f(x) & x \in A\\
\frac{q(x)-p(x)}{\ve}+\frac{\ve-\ve'}{\ve}f(x) & x \in \bar A
\end{cases}
  \end{align*}
  
  for any valid probability distribution $f(x)$ on the same space. It is easy to see that $p'(x)$ and $q'(x)$ are valid probability distributions satisfying (i). It is also interesting to note that, in the case of equality $\delta_{TV}(p,q) = \ve$, $p'(x)$ and $q'(x)$ respectively have a non-zero support only on $A$ and $\bar A$
\end{proof}

Note that equality $\delta_{TV}(p,q) = \ve'$ is achieved iff $\ve'$ is the maximum $\ve$ satisfying $(i)$ .\\

\textbf{LP for an observed distribution with $\text{Pr}\bm {(a_1 \oplus b_1 \oplus c_1 = 1) \geq 1- \ve}$:} Nonlocality of an observed distribution $P(\bm x_1 \bm x_2)$ can be checked by the infeasibility of the following quadratic constrained program with 6 probability distributions
$P(\bm x_2 , \bm t), P'(\bm x_1,\bm x_2), \widetilde P'(\bm x_1,\bm x_2), p_{\alpha}(t), p_{\beta}(t), p_{\gamma}(t)$.

\begin{align}
&(*')\,\nonumber  P(\bm x_1,\bm x_2) = \underbrace{\sum_{\bm t} P_{\scriptscriptstyle PTC}(\bm x_1|\bm t)  P(\bm x_2, \bm t)}_{\widetilde{P}(\bm x_1 \bm x_2)} -\fe P'(\bm x_1,\bm x_2) + \fe \widetilde P'(\bm x_1,\bm x_2)\\
&(4.iii')\, \sum_{\bm x_1} \widetilde P'(\bm x_1,\bm x_2) =
   \sum_{\bm x_1}   P'(\bm x_1,\bm x_2)\\
&(ii) \  p_{\alpha}(1) \sum_{b_2,c_2}  {P(a_2,b_2,c_2, 0,t_{\beta}, t_{\gamma})} - p_{\alpha}(0)\sum_{b_2,c_2}P(a_2,b_2,c_2, 1, t_{\beta}, t_{\gamma}) = 0 \quad \circlearrowleft
\end{align}

Note that  $P(t_\alpha,t_\beta, t_\gamma)= p_\alpha(t_\alpha)p_\beta(t_\beta)p_\gamma(t_\gamma)$ is implied by the last constraint. Furthermore, this constraint is coming from the network structure, which implies $P(a_2|0,t_{\beta}, t_{\gamma}) = P(a_2|1, t_{\beta}, t_{\gamma})$.
Moreover, note that here $p_{\alpha}(t), p_{\beta}(t), p_{\gamma}(t)$  correspond to the PTC distribution $\widetilde P$ and not the observed distribution $P$, therefore we do not know their values and we consider them as additional variables.  Note that the quadratic constraints in $(ii)$ are not convex, i.e. if we write these constraints in the form of $x^T Q x$, the matrix $Q$ is not positive semidefinite.

Note that although we do not know the distributions $p_{\alpha}(t), p_{\beta}(t), p_{\gamma}(t)$ but we can bound them:

First notice that $p_{\alpha}(t_{\alpha}=0) = \frac{1}{2} \big( 1+\sqrt{|\frac{\widetilde{E}_{b_1} \widetilde{E}_{c_1}}{\widetilde{E}_{a_1}}|}  \big) $. Where, $\widetilde{E}_{x_1} = \widetilde{P}(x_1=0)-\widetilde{P}(x_1=1)$. It is straightforward to see that the total variation distance leads to the conclusion that $|\widetilde E_{x_1} - E_{x_1}|\leq 2 \fe$.
In fact, a slightly better bound can be obtained by using the fact that $\widetilde P$ is in the PTC slice, see Section \ref{app: para 3e} below,
\be
|\widetilde E_{x_1} - E_{x_1}^\Lambda|\leq (2\fe-\ve),
\ee
where $E_{x_1}^\Lambda = P(x_1=0, a_1\oplus b_1\oplus c_1=1)-P(x_1=1, a_1\oplus b_1\oplus c_1=1)$ is computed from the PTC part of the observed distribution $P$. Hence, if all the correlators for the observed distribution satisfy $|E^\Lambda_{x_1}|\geq (2\fe-\ve)$ one can use Eq.~\eqref{eq: token values} to obtain
\be\begin{split}
l_{\alpha} := \frac{1}{2}\left(1+\sqrt{\frac{(|E_{b_1}^\Lambda|-(2\fe-\ve))(| E_{c_1}^\Lambda|-(2\fe-\ve))}{|E_{a_1}^\Lambda|+(2\fe-\ve)}}\right)&\leq   p_\alpha(0)  \\
&\leq \frac{1}{2}\left(1+\sqrt{\frac{(|E_{b_1}^\Lambda|+(2\fe-\ve))(| E_{c_1}^\Lambda|+(2\fe-\ve))}{|E_{a_1}^\Lambda|-(2\fe-\ve)}}\right) := u_{\alpha} \quad \circlearrowleft .
\end{split}\ee

Therefore $(ii)$ can be written as:
\begin{align}\label{eq: fvabfl}
\  l_{\alpha} \sum_{b_2,c_2}  {P(a_2,b_2,c_2, 0,t_{\beta}, t_{\gamma})} - (1- l_{\alpha}) \sum_{b_2,c_2}P(a_2,b_2,c_2, 1, t_{\beta}, t_{\gamma}) \leq 0 \\\label{eq: fvabf}
\  u_{\alpha} \sum_{b_2,c_2}  {P(a_2,b_2,c_2, 0,t_{\beta}, t_{\gamma})} - (1- u_{\alpha}) \sum_{b_2,c_2}P(a_2,b_2,c_2, 1, t_{\beta}, t_{\gamma}) \geq 0.
\end{align}

As an illustration, we can obtain bounds on the white noise robustness of the RGB4 distributions, see Fig.~\ref{fig:RGB4_whitenoise}.

\begin{figure}[h!]
    \centering    \includegraphics[width=0.5\columnwidth]{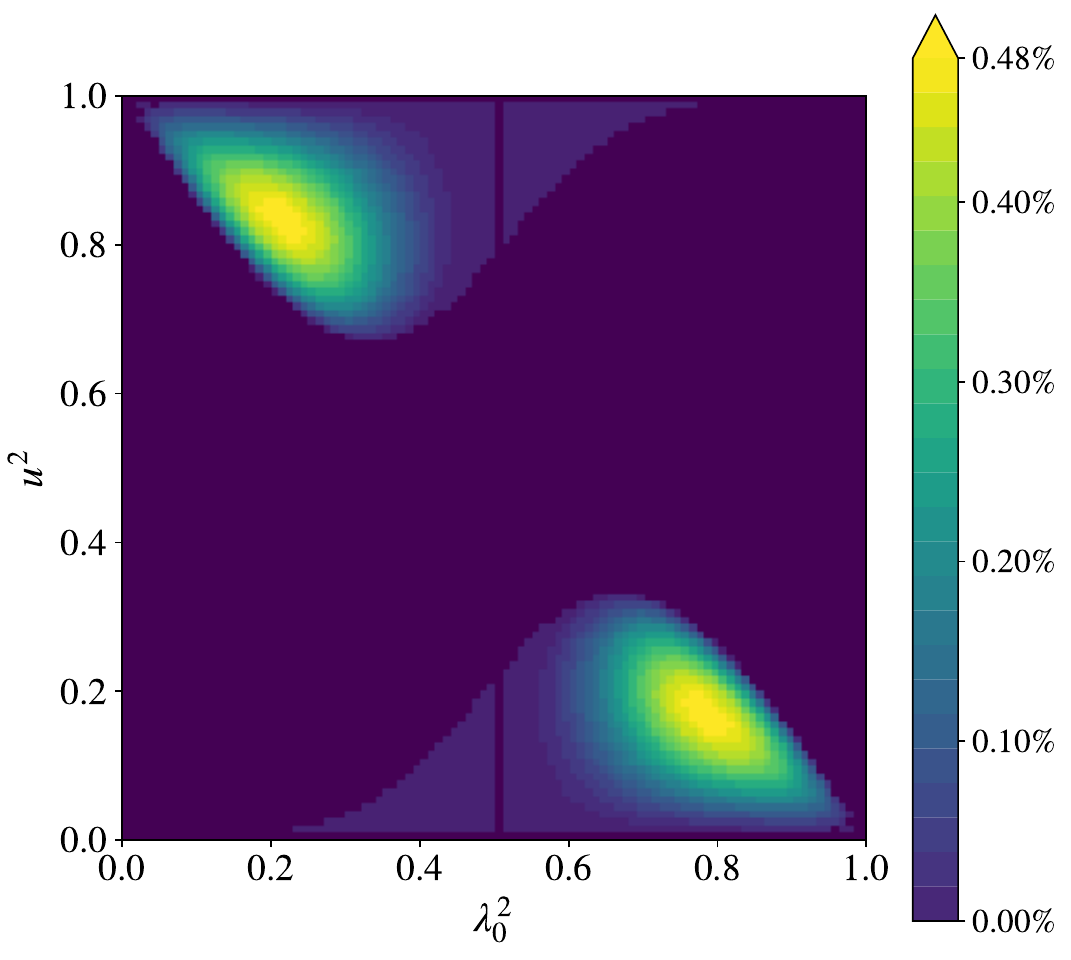}
    \caption{\textbf{Noise robustness of the RGB4 distributions.} Maximal amount of white noise below which we can prove the nonlocality of RGB4 distributions (TBSM distributions with $\phi_w=0$).}
    \label{fig:RGB4_whitenoise}
\end{figure}

\subsection{Collection of LPs (the grid technique for $p_\xi(t_\xi)$)}
\label{app: grid}
Finally, note that the LP has to be satisfied for some values $p_\xi(t_\xi)$ of the token probabilities in the intervals $[l_\alpha,u_\alpha]$. In practice, it is convenient to divide each interval into smaller intervals $[l_\alpha +k\frac{u_\alpha-l_\alpha}{M}, l_\alpha +(k+1)\frac{u_\alpha-l_\alpha}{M}]$, and verify that no solution can be found within any of the $M^3$ possible combinations of intervals. This gives a better result as compared to the harsher relaxation in Eqs.~(\ref{eq: fvabfl},\ref{eq: fvabf}). In particular, for the figures \ref{fig:whitenoise},\ref{fig:TV},\ref{fig:RGB4_whitenoise},\ref{fig:white_noise_SideBySide}, and \ref{fig:NoClick_SinglePhotone} we used $M=16$, while to obtain the best bounds $\omega=0.544\%, \omega'=0.504\%$ white noise and $\ve=0.24\%$ total-variation distance we went to $M=128$.\\

Furthermore, to see how good is this relaxation of the $p_\xi(t_\xi)$ variable to a finite set of intervals, we also check the feasibility of the LP for the honest values of $p_\xi(0)=q_\xi = \lambda_0^2$, i.e. the values which appear in the quantum strategy and are thus feasible by construction. This gives an upper bound on the maximum amount of noise for which the original feasibility problem could guarantee that the distribution is nonlocal.

\subsection{Development of a tighter bound on $\widetilde {E}_{x_1}$} 
\label{app: para 3e}
Note that we have:
\begin{align*}
P(\bm x_1, \bm \xi \in \Lambda) = P(\bm x_1; \bm \xi \in \Upsilon) + (\fe-\ve) P''(\bm x_1)\\
\widetilde{P}(\bm x_1) = P(\bm x_1; \bm \xi \in \Upsilon) + \fe \widetilde{P'}(\bm x_1)
\end{align*}
Which can be rewritten as :
\begin{align*}
\widetilde{P}(\bm x_1) = P(\bm x_1, \bm \xi \in \Lambda) + \fe \widetilde P'(\bm x_1) - (\fe-\ve) P''(\bm x_1)
\end{align*}

Therefore with $E_{x_1}^\Lambda = P(x_1=0, a_1\oplus b_1\oplus c_1=1)-P(x_1=1, a_1\oplus b_1\oplus c_1=1)$ we have

\begin{align*}
\widetilde{E}_{x_1} = E_{x_1}^\Lambda + \fe \widetilde E'_{x_1} - (\fe-\ve) E''_{x_1}
\end{align*}
Therefore,
\begin{align*}
\widetilde {E}_{x_1} \in [ E_{x_1}^\Lambda - (2\fe- \ve), E_{x_1}^\Lambda + (2\fe-\ve)].
\end{align*}

\newpage
\section{Other noise models}
\label{app: other noise}

{\paragraph{\textbf{White noise at the measurement.}}
Alternatively to the white noise at the source, discussed in the main text, one can consider white noise affecting the  measurements. In this case the POVM elements describing the measurement are replaced by 
\be
E_{x_1,x_2}=(1-\omega' ) \ketbra{\phi_{x_1,x_2}} + \omega ' \frac{\mathds{1}}{2}\otimes \frac{\mathds{1}}{2}.
\ee
Note that here for the  resulting distribution  we have $\text{Pr}(a_1\oplus b_1 \oplus c_1=1)  = \frac{1}{2} \left(1+ (1-\omega ')^3 \right)$. In Fig.~ \ref{fig:white_noise_SideBySide}(a),  we depict the robustness to measurement white noise of the TBSM distributions with $\lambda_0^2=0.22$. The maximal  amount of noise for which we can prove that nonlocality of the distribution is given by $\omega'=0.504\%$ for the case $\phi_u=0.42$ and $\phi_w=0$.

Considering the combined effect of white noise at the sources and the parties we get, 
$\text{Pr}(a_1\oplus b_1 \oplus c_1=1)  = \frac{1}{2} \left(1+ (1-\omega ' )^3(1-\omega)^3 \right)$. In Fig. \ref{fig:white_noise_SideBySide}(b),
we depict the nonlocality of the TBSM distribution with $\phi_u=0.42$ and $\phi_w=0$ as a function of both the noise parameters.  
\begin{figure*}[h!]
    \centering
    \subfloat[\centering \textbf{White noise at the measurement}]{{\includegraphics[width=0.52\textwidth]{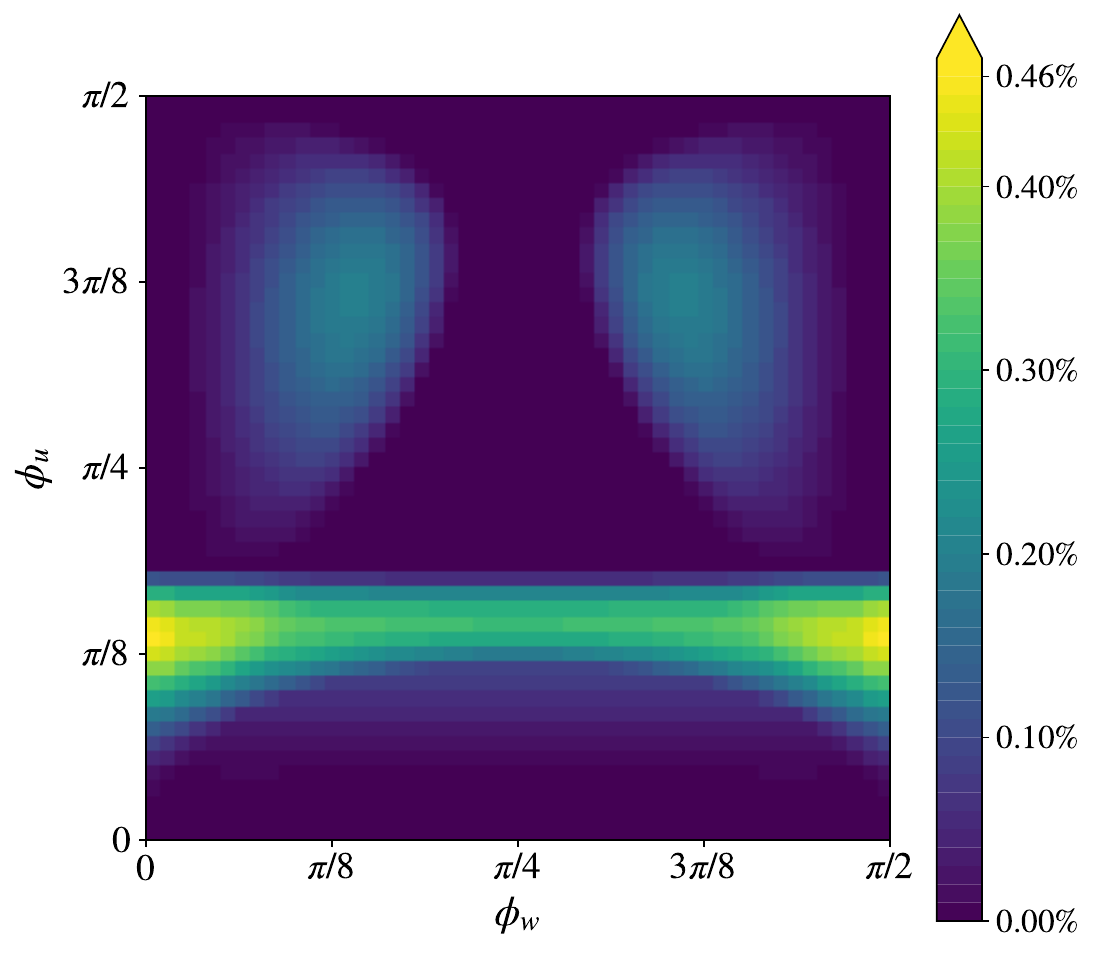} }}%
    \qquad
    \subfloat[\centering \textbf{Noisy states and measurements}]{{\includegraphics[width=0.41\textwidth]{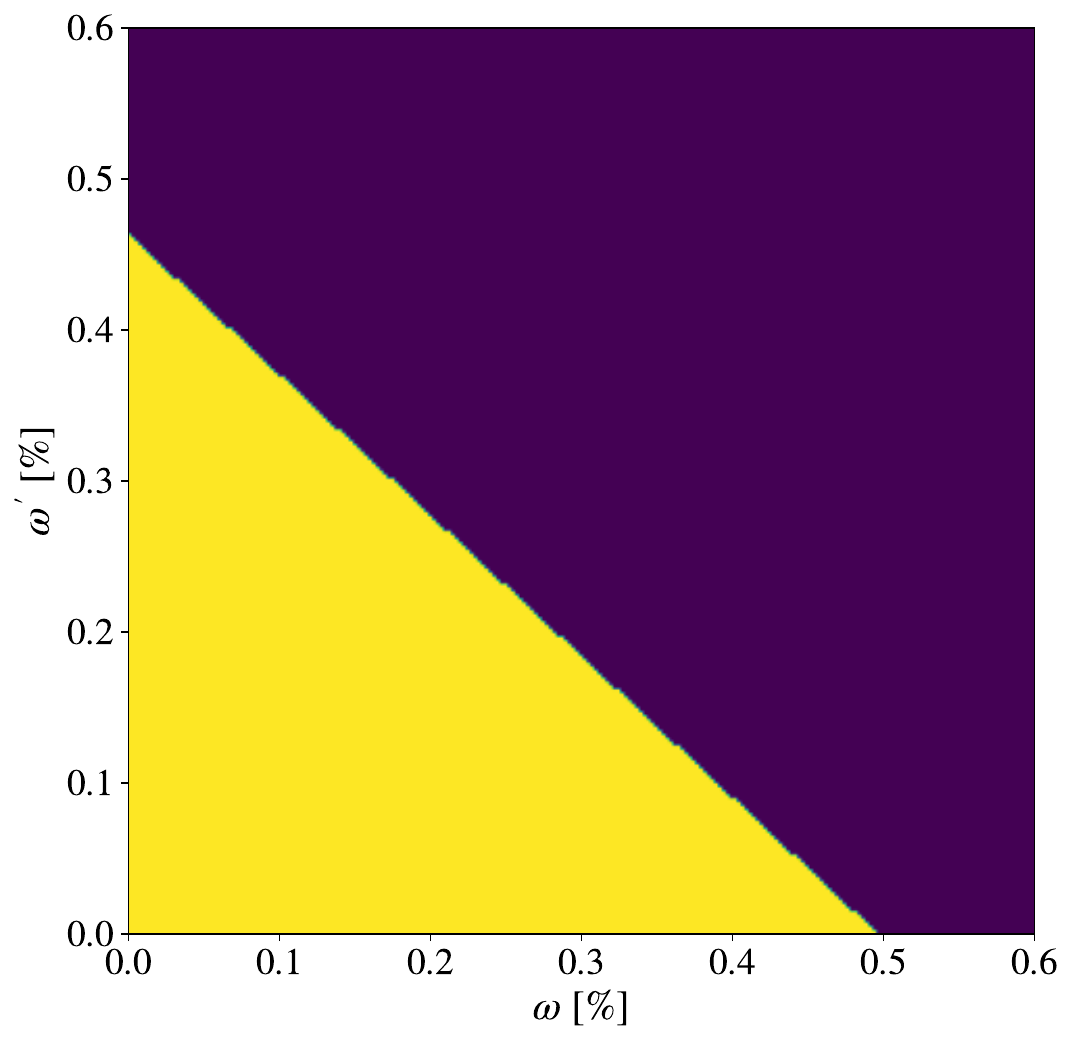} }}%
    \caption{ \textbf{(a)} Maximal amount of white noise at the level of measurements per party below which we can prove the nonlocality of the TBSM distribution.  Parameters: $u=\cos(\phi_u)$, $w=\cos(\phi_w)$ and $\lambda_0^2= 0.22$. \textbf{(b)} Noise robustness of the TBSM distribution when noise is added to both states ($\omega$) and measurements ($\omega'$). The distribution is nonlocal in the yellow region, for parameters $\lambda_0^2 = 0.22, \phi_u=0.42, \phi_w=0$.} \label{fig:white_noise_SideBySide}
\end{figure*}

\vspace{0.5cm}

\paragraph{\textbf{No-click noise.}} Another natural imperfection to consider is the possibility that the measurement does not produce an outcome with probability $p$. This is described by five-outcome POVM given by
\begin{align}
E_{x_1,x_2} &= (1-p) \ketbra{\phi_{x_1,x_2}} \qquad \forall x_1,x_2=0,1 \\
E_\nc &= p\,  \id
\end{align}
where $E_\nc $ is the element associated with the no-click event. The probability distribution obtained with such measurements has five outcomes for each party and is not readily analyzed with the results presented in this paper. However, a trick commonly used for device-independent quantum key distribution is to relabel the no-click outcome to one of the four "good" outcomes restoring output cardinality.

We  find that the optimal strategy (for noise robustness) is for each party to relabel the no-click outcome to $\nc \to (1,0)$, which corresponds to the POVM
\begin{align}
E_{1,0} &= (1-p) \ketbra{\phi_{1,0}} + p \, \id \\
E_{x_1,x_2} &= (1-p) \ketbra{\phi_{x_1,x_2}}  \qquad \forall (x_1,x_2)\neq (1,0).
\end{align}
For this strategy, we prove the nonlocality of the resulting distribution for up to a no-click probability of $p = 0.7\%$, see Fig.~ \ref{fig:NoClick_SinglePhotone}(a).\\

Note that if e.g. the sources produce two photons entangled in polarization $\ket{0}=\ket{1_h}$ and $\ket{1}=\ket{1_v}$, a limited photon transmission (or detector efficiency) $\eta$ would correspond to the no-click noise with $(1-p) =\eta^2$.

\vspace{0.5cm}

\paragraph{\textbf{Single photon entanglement implementation with loss.}}

For $\phi_w=0$ the TBSM distributions (RGB4) can be realised with single-photon entanglement (vacuum state $\ket{0}$ and single photon $\ket{1}$), linear optics and photon number resolving detectors~\cite{Abiuso2022}. In this case the dominant noise is the limited efficiency of the detectors, which can be modeled as a photon loss channel with transmission $\eta$. Formally, we have the channel
\begin{align}
\cE(\rho) &= K_0 \rho K_0^\dag+ K_1 \rho K_1^\dag\\
K_0 &=\ketbra{0} + \sqrt{\eta} \ketbra{1}{1} \\
K_1 &= \sqrt{1-\eta} \ketbra{0}{1}
\end{align}
which is applied on all of the six modes. Alternatively, the noise can be absorbed in the measurements by considering the POVM elements 
\be
E_{x_1,x_2}= \cE^*\otimes \cE^*(\Pi_{x_1,x_2})=\sum_{i,j=0}^1 K_i^\dag\otimes K_j^\dag \, \Pi_{x_1,x_2}\,  K_i\otimes K_j.
\ee
Recall that for $\phi_w=0$ the original projectors for $x_1=1$ read $\Pi_{00}=\ketbra{00}$ and $\Pi_{01}=\ketbra{11}$, while the lossy POVM elements read
\begin{align}
E_{0,0} &= \left(
\begin{array}{cccc}
 1 & 0 & 0 & 0 \\
 0 & 1-\eta  & 0 & 0 \\
 0 & 0 & 1-\eta  & 0 \\
 0 & 0 & 0 & (1-\eta )^2 \\
\end{array}
\right) \qquad\qquad \quad E_{0,1} = \left(
\begin{array}{cccc}
 0 & 0 & 0 & 0 \\
 0 & 0 & 0 & 0 \\
 0 & 0 & 0 & 0 \\
 0 & 0 & 0 & \eta ^2 \\
\end{array}
\right)\\
E_{1,0}&=\left(
\begin{array}{cccc}
 0 & 0 & 0 & 0 \\
 0 & \eta  u^2 & \eta  u v & 0 \\
 0 & \eta  u v & \eta  v^2 & 0 \\
 0 & 0 & 0 & (1-\eta ) \eta  \\
\end{array}
\right) \qquad 
E_{1,1}=\left(
\begin{array}{cccc}
 0 & 0 & 0 & 0 \\
 0 & \eta  v^2 & -\eta  u v & 0 \\
 0 & - \eta  u v & \eta  u^2 & 0 \\
 0 & 0 & 0 & (1-\eta ) \eta  \\
\end{array}
\right)
\end{align}
in the computational basis $\{\ket{00},\ket{01},\ket{10},\ket{11}\}$.\\ 

The nonlocality of the resulting distributions can be detected up to a transmission loss of $1-\eta=0.275 \%$, see Fig.~\ref{fig:NoClick_SinglePhotone} (b).

\begin{figure*}[htbp]
    \centering
    \subfloat[\textbf{No-click scenario}]{%
        \includegraphics[width=0.52\textwidth]{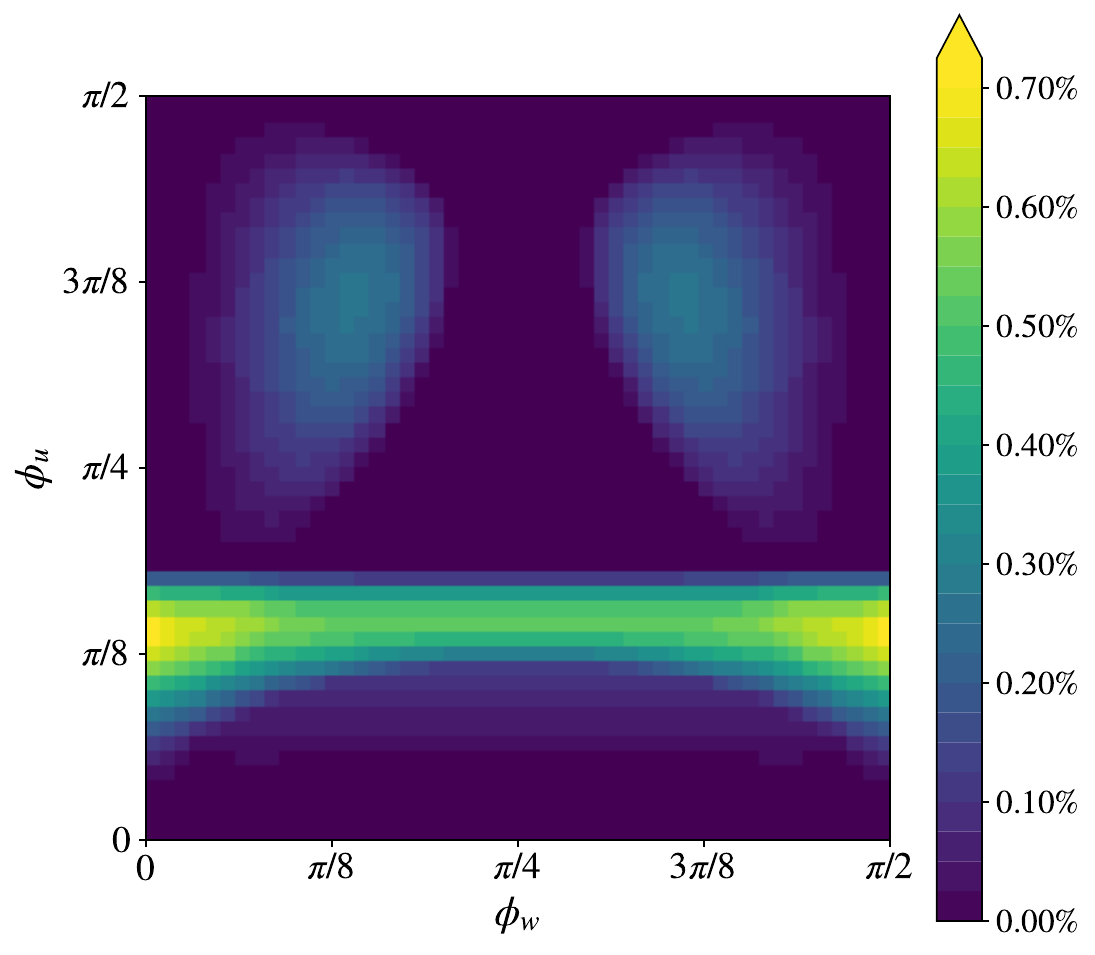}
    }
    \hfill
    \subfloat[\textbf{Single-photon entanglement scenario with loss}]{%
        \includegraphics[width=0.46\textwidth]{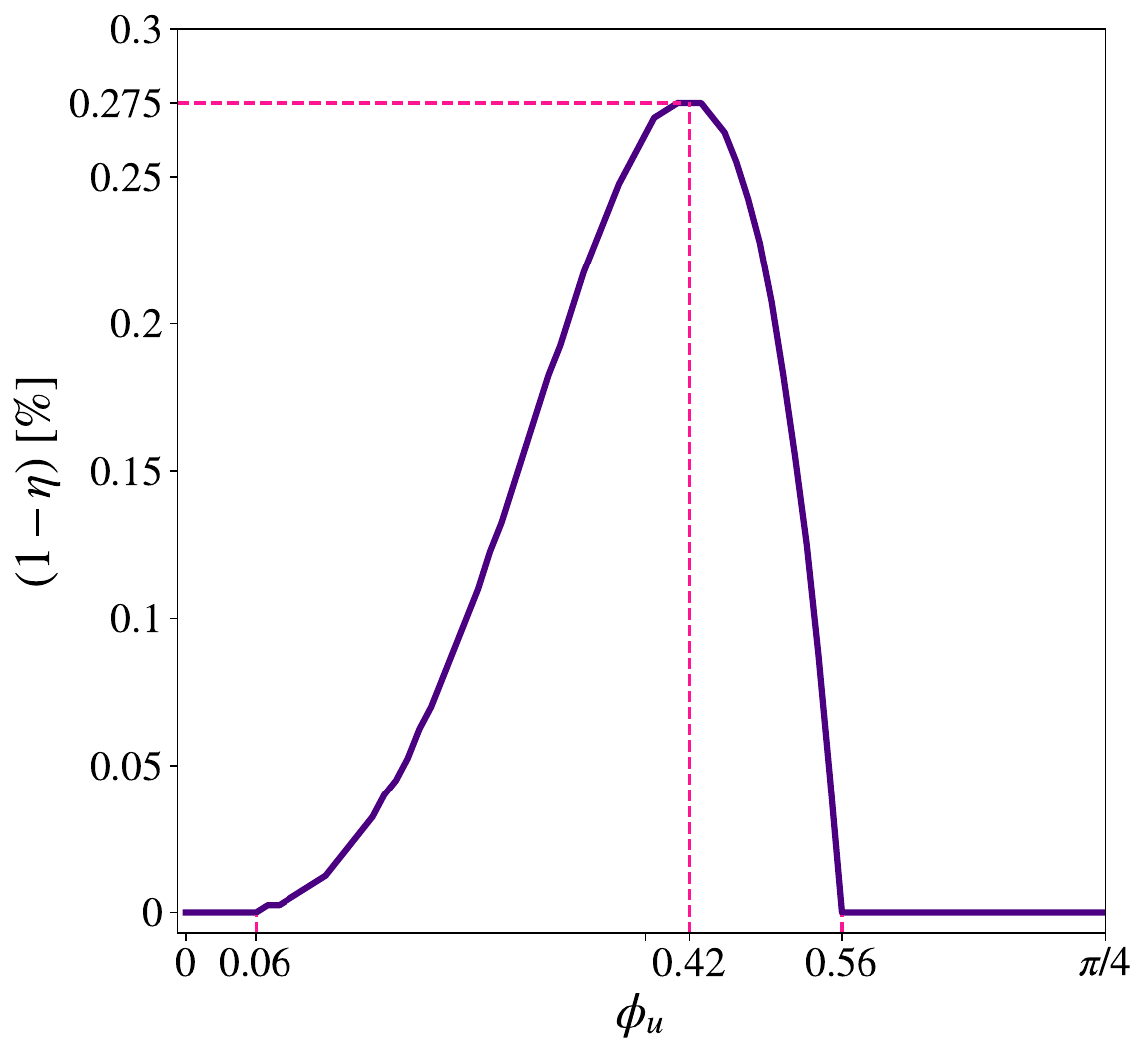}
    }
    \caption{ \textbf{(a)} Maximal no-click probability at each party below which we can prove the nonlocality of the distribution. Parameters: $u=\cos(\phi_u)$, $w=\cos(\phi_w)$ and $\lambda_0^2= 0.22$. \textbf{(b)} Maximal amount of photon loss at measurements below which we can prove the nonlocality of the RGB4 distribution (TBSM with $\phi_w=0$) for the single-photon entanglement implementation. Here $\eta$ denotes the detection efficiency, i.e. the probability that an emitted photon is detected. Parameters: $u=\cos(\phi_u)$, $\phi_w=0$ and $\lambda_0^2= 0.22$}
    \label{fig:NoClick_SinglePhotone}
\end{figure*}

\newpage

\section{Linear Program for the total variation distance}
\label{app: var. dist. LP}

Given a target quantum distribution $\bar P(\bm x_1,\bm x_2)\in \mathcal{Q}_\triangle^4$ define the total variation distance ball as the following subset of the correlation set
\be\label{eq: TVD ball2}
\mathcal{B}_\ve(\bar P) = \{P\in \mathcal{P}_\triangle^4| \delta(P,\bar P) \leq \ve\}.
\ee
We will now demonstrate that for well-chosen target distributions $\bar P$ and small enough $\ve$ the entire set $\mathcal{B}_\ve(\bar P)$ is nonlocal. Concretely, consider $\bar P$ such that $a_1\oplus b_1 \oplus c_1=1$. Any distribution $P\in \mathcal{B}_\ve(\bar P)$ must satisfy
\be
\text{Pr}(a_1\oplus b_1 \oplus c_1=1)\geq 1-\ve.
\ee

We will prove that all distributions in $\mathcal{B}_\ve(\bar P)$ are nonlocal by contradiction.  Let us assume that the ball contains some triangle-local distribution $P$, then from Result 3 there must exist a local PTC distribution $\widetilde P$ with $\delta(\widetilde P(\bm x_1,\bm x_2),P(\bm x_1,\bm x_2) )\leq \fe$ and $\delta(\widetilde P(\bm x_2),P(\bm x_2) )=0$. By triangle inequality, $\widetilde P$ must also satisfy
\begin{align}
\delta(\widetilde P(\bm x_1,\bm x_2),\bar P(\bm x_1,\bm x_2))&\leq \delta(\widetilde P,P) +\delta(P,\bar P)\leq \ve + \fe \label{app:eq:TV_3eps_x12}\\
\delta(\widetilde P(\bm x_2),\bar P(\bm x_2)) &\leq 0 +\delta(P,\bar P)\leq \ve.\label{app:eq:TV_eps_x2}
\end{align}

Therefore, from Eq. \eqref{app:eq:TV_3eps_x12} and Lemma \ref{app:lemma1}

\begin{equation} \label{app:eq:TV_1}
 \bar P(\bm x_1,\bm x_2) = \underbrace{\sum_{\bm t} P_{\scriptscriptstyle PTC}(\bm x_1|\bm t)  P(\bm x_2, \bm t)}_{\widetilde{P}(\bm x_1 \bm x_2)} -(\ve+\fe) \bar P'(\bm x_1,\bm x_2) + (\ve+\fe) \widetilde P'(\bm x_1,\bm x_2).
\end{equation}

Taking the marginal we get

\begin{equation} 
 \bar P(\bm x_2) = \widetilde{P}(\bm x_2) -(\ve+\fe) \bar P'(\bm x_2) + (\ve+\fe) \widetilde P'(\bm x_2).
\end{equation}

Now from this equation and  Eq. \eqref{app:eq:TV_eps_x2} we get

\begin{equation}
    \delta(\widetilde P'(\bm x_2),\bar P'(\bm x_2)) = \frac{1}{\ve + \fe} \delta(\widetilde P(\bm x_2),\bar P(\bm x_2)) \leq \frac{\ve}{\ve + \fe} := \alpha.
\end{equation}

Note that as $2.5 \ve \leq \fe \leq 3 \ve$, we have $ 1/4 \leq \alpha \leq 2/7$, moreover for small values of $\ve$, we have $\fe \approx 2.5 \ve$ and $\alpha \approx 2/7$. Again by Lemma \ref{app:lemma1}, we have

\begin{align} \label{app:eq:TV_2}
(4.iii')\, \sum_{\bm x_1} \widetilde P'(\bm x_1,\bm x_2) -
   \sum_{\bm x_1}   \bar P'(\bm x_1,\bm x_2) + \alpha \widetilde P''(\bm x_2) - \alpha \bar P''(\bm x_2) = 0.
\end{align}

Moreover, like before, we use the network constraint that $P(a_2|0,t_{\beta}, t_{\gamma}) = P(a_2|1, t_{\beta}, t_{\gamma})$ written as the following

\begin{align}
(ii) \  p_{\alpha}(0) \sum_{b_2,c_2}  {P(a_2,b_2,c_2, 0,t_{\beta}, t_{\gamma})} - p_{\alpha}(1)\sum_{b_2,c_2}P(a_2,b_2,c_2, 1, t_{\beta}, t_{\gamma}) = 0 \quad \circlearrowleft.
\end{align}

Note that here $p_{\alpha}(t), p_{\beta}(t), p_{\gamma}(t)$  correspond to the PTC distribution $\widetilde P$ which is not observed, but we can bound $p_{\alpha}(t), p_{\beta}(t), p_{\gamma}(t)$:

Notice that $p_{\alpha}(t_{\alpha}=0) = \frac{1}{2} \big( 1+\sqrt{|\frac{\widetilde{E}_{b_1} \widetilde{E}_{c_1}}{\widetilde{E}_{a_1}}|}  \big) $. Where, $\widetilde{E}_{x_1} = \widetilde{P}(x_1=0)-\widetilde{P}(x_1=1)$. It is straightforward to see that the total variation distance $\delta (\widetilde P, \bar P) \leq \fe+\ve$, leads to the conclusion that $|\widetilde E_{x_1} - \bar E_{x_1}|\leq 2(\fe+\ve) $. Where $\bar E_{x_1} = \bar{P}(x_1=0)-\bar{P}(x_1=1)$ is computed from the target PTC distribution $\bar P$. Hence, if all the correlators for the target distribution satisfy $|\bar E_{x_1}|\geq 2(\fe+\ve)$ one can use Eq.~\eqref{eq: token values} to obtain
\begin{align}\begin{split}
l_{\alpha} := \frac{1}{2}\left(1+\sqrt{\frac{(|\bar E_{b_1}|-2(\fe+\ve))(| \bar E_{c_1}|- 2(\fe+\ve))}{|\bar E_{a_1}|+ 2(\fe+\ve)}}\right)\leq \  p_\alpha(0) \ \\ p_\alpha(0) \leq \frac{1}{2}\left(1+\sqrt{\frac{(|\bar E_{b_1}|+ 2(\fe+\ve))(|\bar  E_{c_1}|+ 2(\fe+\ve))}{|\bar E_{a_1}|- 2(\fe+\ve)}}\right) := u_{\alpha} \quad \circlearrowleft .
\end{split}\end{align}

Therefore $(ii)$ can be written as:
\be
\begin{split}
\  l_{\alpha} \sum_{b_2,c_2}  {P(a_2,b_2,c_2, 0,t_{\beta}, t_{\gamma})} - (1- l_{\alpha}) \sum_{b_2,c_2}P(a_2,b_2,c_2, 1, t_{\beta}, t_{\gamma}) \leq 0 \\
\  u_{\alpha} \sum_{b_2,c_2}  {P(a_2,b_2,c_2, 0,t_{\beta}, t_{\gamma})} - (1- u_{\alpha}) \sum_{b_2,c_2}P(a_2,b_2,c_2, 1, t_{\beta}, t_{\gamma}) \geq 0.
\end{split} \label{app:eq:TV_ineq}
\ee

Combining Equations \ref{app:eq:TV_1}, \ref{app:eq:TV_2}, and \ref{app:eq:TV_ineq}, yields a set of linear constraints that should be satisfied if there exists any local distribution inside the ball $\mathcal{B}_\ve(\bar P)$. Therefore, the infeasibility of this LP proves the non-locality of all of the points in $\mathcal{B}_\ve(\bar P)$. (see Fig.~ \ref{fig:TV}). Again, to find tighter bounds, we use the grid technique.

\end{document}